\documentclass[runningheads,envcountsame]{llncs}
\usepackage[T1]{fontenc}

\usepackage{amsthm}
\usepackage{amssymb}
\usepackage{thmtools,thm-restate}

\usepackage{tikz}
\usepackage{xspace}

\usepackage{booktabs}   %% For formal tables:
\usepackage{subcaption} %% For complex figures with subfigures/subcaptions
\usepackage{enumitem} 
\usepackage{makecell} 
\usepackage{mathtools}
\usepackage{wrapfig}

\usepackage{amsmath}
\usepackage{pifont}
\usepackage{thm-restate}
\usetikzlibrary{arrows,automata,shapes,decorations,decorations.markings,calc, matrix,decorations.pathmorphing, patterns,backgrounds,shapes.misc,arrows.meta,positioning}

\usepackage{xspace}

\usepackage[ruled,vlined,resetcount,linesnumbered,noend]{algorithm2e}

\usepackage{parskip}

\usepackage{comment}

\usepackage{arydshln}
 \usepackage{multicol}
 \usepackage{multirow}

\usepackage{hyperref}
\usepackage[capitalise]{cleveref}

\usepackage{varwidth}

\usepackage{array}
\newcolumntype{H}{>{\setbox0=\hbox\bgroup}c<{\egroup}@{}}

\usepackage{ifthen}
\usepackage{stackengine}

\usepackage[normalem]{ulem}

\usepackage{todonotes}
  \newcounter{todocounter}

\newcommand\varset{{\mathbb X}}

\newcommand\valset{{\mathbb V}}

\newcommand\lbl\ell
\newcommand\action\lbl

\newcommand\run\rho
\newcommand\thread{S} 
\newcommand\program{\mathsf{Prog}}
\newcommand{\inter}{\rho}

\newcommand{\pp}{\pi} 
\newcommand{\vscp}{\textsf{VSC}$^\pp$}
\newcommand{\points}{\mathcal{P}}
\newcommand{\opset}{\Sigma}
\newcommand{\eventset}{\Sigma_{\program}}

\colorlet{colorPO}{darkgray!80!black}
\colorlet{colorRF}{blue}
\colorlet{colorOB}{orange}
\colorlet{colorCO}{red!80!black}
\colorlet{colorMO}{red!80!black}
\colorlet{colorFR}{olive}
\colorlet{colorECO}{orange}
\colorlet{colorCOM}{magenta}
\colorlet{colorSW}{teal}
\colorlet{colorHB}{green!40!black}
\colorlet{colorPPO}{magenta}
\colorlet{colorRSEQ}{green!40!black}
\colorlet{colorSC}{violet}
\colorlet{colorHBMO}{brown}
\colorlet{colorPSC}{violet}
\colorlet{colorREL}{olive}
\colorlet{colorWB}{olive}
\colorlet{colorRMW}{brown}
\newcommand\prog{{\mathcal P}}

\newcommand\conf\gamma

\tikzset{
   every path/.style={>=stealth},
   po/.style={->,draw=colorPO,shorten >=-0.5mm,shorten <=-0.5mm},
   ppo/.style={->,draw=colorPPO,shorten >=-0.5mm,shorten <=-0.5mm},
   sw/.style={->,draw=colorSW,shorten >=-0.5mm,shorten <=-0.5mm},
   rf/.style={->,draw=colorRF,dashed,shorten >=-0.5mm,shorten <=-0.5mm},
   mo/.style={->, draw=colorMO,dotted,shorten >=-0.5mm,shorten <=-0.5mm},
    mrf/.style={->,draw=colorRF,dashed,shorten >=-0.5mm,shorten <=-0.5mm},   
   urf/.style={->,draw=colorRF,shorten >=-0.5mm,shorten <=-0.5mm},
   fr/.style={->,draw=colorFR,dashed,shorten >=-0.5mm,shorten <=-0.5mm},
   hb/.style={->,draw=colorHB,thick,shorten >=-0.5mm,shorten <=-0.5mm},
   co/.style={->,draw=colorCO,dotted,very thick,shorten >=-0.5mm,shorten <=-0.5mm},
   rmw/.style={->,draw=colorRMW,thick,shorten >=-0.5mm,shorten <=-0.5mm},
   rseq/.style={->,draw=colorRSEQ,thick,dotted,shorten >=-0.5mm,shorten <=-0.5mm},
   com/.style={->,draw=colorCOM,thick,shorten >=-0.5mm,shorten <=-0.5mm},
}

\renewcommand{\emptyset}{\varnothing}
\newcommand{\nats}{\mathbb{N}}

\newcommand{\scmm}{\mathsf{SC}}

\newcommand{\po}{\mathsf{\color{colorPO}po}}

\newcommand{\rd}{\mathtt{r}}
\newcommand{\wt}{\mathtt{w}}

\newcommand{\NP}{\mathsf{NP}}

\newcommand{\Sel}{\mathsf{Sel}}
\newcommand{\Checker}{\mathsf{Checker}}
\newcommand{\BB}{\mathsf{B}}
\newcommand{\Start}{\mathsf{Start}}
\newcommand{\Finish}{\mathsf{Finish}}

\SetKwProg{myfun}{function}{}{}
\SetKwProg{myhandler}{handler}{}{}
\SetKwFunction{init}{Initialization}
\SetKwFunction{getLW}{getLastWriteBeforeOffline}
\SetKwFunction{poprop}{poPropagate}
\SetKwFunction{joinchangedindex}{pairwiseMaxIndexes}
\SetKwFunction{getLWB}{getLastWriteBefore}
\SetKwFunction{getLRB}{getLastReadBefore}
\SetKwFunction{getLRWB}{getLastReadWriteBefore}
\SetKwFunction{initWtLst}{initWtLst}
\SetKwFunction{initRdLst}{initRdLst}
\SetKwFunction{initWtRdLst}{initWtRdLst}
\SetKwFunction{get}{get}
\SetKwFunction{getHBTS}{getHBTimestamps}
\SetKwFunction{getTCPC}{getTopAndPositionInRFChain}
\SetKwFunction{checkRace}{checkRace}
\SetKwFunction{rdhandler}{readHandler}
\SetKwFunction{wthandler}{writeHandler}
\SetKwFunction{acqhandler}{acquire}
\SetKwFunction{relhandler}{release}
\SetKwInput{Input}{Input}
\SetKwInOut{Output}{Output}
\SetKw{Let}{let}
\SetKw{Break}{break}
\SetKw{NOT}{not}
\SetKw{declare}{declare}
\SetKw{exit}{exit}
\SetKw{Continue}{continue}
\SetKw{Return}{return}
\SetKwFor{Foreach}{for each}{}{}%
\DontPrintSemicolon

\SetCommentSty{mycommfont}
\SetNoFillComment

\tikzset{
   every path/.style={>=stealth},
   po/.style={->,color=colorPO, thick},
   ppo/.style={->,color=colorPPO, thick},
   sw/.style={->,color=colorSW, thick},
   rf/.style={->,color=colorRF,dashed, thick},
   mrf/.style={->,color=colorRF,dashed, thick},   
   urf/.style={->,color=colorRF, thick},
   fr/.style={->,color=colorFR,dashed, thick},
   hb/.style={->,color=colorHB,thick, thick},
   mo/.style={->,color=colorMO,dotted, very thick},
   dob/.style={->,color=colorDOB, very thick},
   ob/.style={->,color=colorOB, very thick},
   rmw/.style={->,color=colorRMW,thick, thick},
   rseq/.style={->,color=colorRSEQ,thick,dotted, thick},
   com/.style={->,color=colorCOM,thick, thick},
}

\makeatletter
\newcommand{\labitem}[2]{%
\def\@itemlabel{\textbf{#1}}
\item
\def\@currentlabel{#1}\label{#2}}
\makeatother

\newcommand{\Init}{\mathsf{Init}}

\SetKwFunction{ChCon}{CheckConsistency}
\SetKwFunction{ChPCon}{CheckPreemptionBoundedConsistency}
\SetKwComment{Comment}{/* }{ */}
\SetKwFunction{Explore}{Explore}
\SetKwFunction{ExploreSC}{ExploreSC}
\SetKwFunction{enable}{EnableEvent}
\SetKwFunction{enableth}{EnableThread}
\SetKwFunction{enablethsc}{EnableThreadSC}
\SetKwFunction{buildgraph}{BuildGraph}
\SetKwProg{Fn}{Function}{:}{}

\newcommand{\threewriter}{\textsc{3-Writer}}
\newcommand{\twowriter}{\textsc{2-Writer}}
\newcommand{\onewriter}{\textsc{1-Writer}}

\title{Verifying Sequential Consistency under Bounded Preemptions}
\titlerunning{Verifying Sequential Consistency under Bounded Preemptions}
\author{
R.~Govind\inst{1,2}
\and
S.~Krishna\inst{3}
\and
Sanchari Sil\inst{4}
\and
B.~Srivathsan\inst{4}
}
\institute{
  The Institute of Mathematical Sciences, Chennai, India,
  \email{govind@imsc.res.in}\\
  \and
  Homi Bhabha National Institute, Mumbai, India
  \and
  IIT Bombay, India, 
  \email{krishnas@ cse.iitb.ac.in}\\
  \and
  Chennai Mathematical Institute, India,
  \email{\{sanchari, sri\} @ cmi.ac.in}
}
\authorrunning{R.~Govind et al.}

\begin{document}

\maketitle

\begin{abstract}
    Gibbons and Korach studied a fundamental problem in 1997: given an observed sequence of reads and writes of a multi-threaded program, does there exist \emph{an interleaving} which is sequentially consistent? Apart from applications in testing shared memory implementations, a procedure for this problem is employed in Dynamic Partial-Order-Reduction (DPOR) algorithms. The problem is known to be $\NP$-hard even when different syntactic parameters are kept bounded. 
    
    In this paper, we consider a restriction on the kind of interleaving required: does there exist a sequentially-consistent interleaving with at most $\pp$ preemptions?  
    Empirical evidence suggests that several bugs manifest within a few preemptive switches. This motivates us to investigate the problem under bounded preemptions. Our results exhibit a trichotomy: the problem lends to a polynomial-time algorithm for the class of single-writer programs where for each variable, there is a single thread writing to it; it becomes NP-hard for two-writer programs and finally, for three-writer programs, we get a conditional lower bound under the Exponential-Time-Hypothesis. When the number of preemptions $\pp$ is not bounded, we show the problem to be W[1]-hard, and hence unlikely to be fixed-parameter-tractable with parameter $\pp$. 
\end{abstract}

\section{Introduction}

In~\cite{Gibbons}, Gibbons and Korach considered the following problem.
\begin{center}
  \fbox{%
    \begin{minipage}{0.98\textwidth}
      \noindent \textsc{Verifying Sequential Consistency} (The VSC-problem~\cite{Gibbons})

      \medskip

      \noindent \textbf{Instance:} A variable set $\varset$, a value set $\valset$, a finite collection of nonempty sequences $S_1, S_2, \dots, S_k$, each consisting of a finite set of memory operations of the form ``$read(x,d)$'' or ``$write(x,d)$'' with $x \in \varset$ and $d \in \valset$. 

      \medskip

      \noindent \textbf{Question:} Is there a sequence $S$, an interleaving of $S_1, S_2, \dots, S_k$ such that for each ``$read(x, d)$'' there is a preceding ``$write(x,d)$'' in $S$ with no other ``$write(x, d')$'' in between.
    \end{minipage}%
  }
\end{center}
This problem has applications in \emph{testing shared memories}~\cite{Gibbons} and in dynamic partial order reduction (DPOR) algorithms for verifying multi-threaded programs~\cite{rv-equivalence}. In the context of testing shared memories, the goal is to test whether an implementation of a shared memory system is faithful to sequential consistency. To do that, the test observes a sequence of reads and writes on the memory locations at each thread of the program and determines whether the observed set of sequences admits a sequentially consistent interleaving. This is indeed the exact formulation of the VSC-problem. In the context of DPOR algorithms, the VSC-problem appears in the procedure for stateless model-checking under the \emph{reads-value-from} equivalence~\cite{rv-equivalence}. The reads-value-from equivalence abstracts the executions of a multi-threaded program in terms of the values read by the memory locations. The question of whether an abstract execution admits a concrete execution is simply a reformulation of the VSC-problem.  Clearly, the simple-to-state VSC-problem has critical applications in the verification and analysis of multi-threaded programs. 

The VSC-problem is known to be notoriously hard. It was shown to be $\NP$-hard even when the input instance has only $3$ threads. The complexity arises due to the fact that there are exponentially many possible interleavings and an exhaustive enumeration of all of them seems unavoidable. This raises a natural question of investigating restrictions to the problem that admit a polynomial-time solution. One may consider bounding various aspects of the program syntax, like the number of threads and the number of variables. The problem continues to remain $\NP$-hard under these usual restrictions (see Table~$1$ of \cite{Gibbons}).

In this work, we consider restricting an aspect of the interleaving required. We look at interleavings where there are a bounded number of \emph{preemptions}. A preemption occurs when a currently executing thread is stopped, and the execution moves on to a different thread, and comes back to it later. Empirical evidence suggests that several bugs in multithreaded programs can be caught with very few preemptions~\cite{QadeerMusuvati}, which has resulted in the study of DPOR methods geared towards finding interleavings with bounded preemptions~\cite{Viktor-Buster-23}. This has motivated us to study the VSC-problem under bounded preemptions.

\begin{center}
  \fbox{%
    \begin{minipage}{0.98\textwidth}
      \noindent \textsc{\textbf{VSC$^\pp$-problem}: Verifying Sequential Consistency under $\pp$ preemptions} 

      \medskip

      \noindent \textbf{Fixed:} A constant $\pp \in \mathbb{N}\cup\{ 0\}$

      \medskip 

      \noindent \textbf{Instance:} A variable set $\varset$, a value set $\valset$, a finite collection of nonempty sequences $S_1, S_2, \dots, S_k$, each consisting of a finite set of memory operations of the form ``$read(x,d)$'' or ``$write(x,d)$'' with $x \in \varset$ and $d \in \valset$. 

      \medskip

      \noindent \textbf{Question:} Is there a sequence $S$, an interleaving of $S_1, S_2, \dots, S_k$ \textcolor{blue}{with at most $\pp$ preemptions} such that for each ``$read(x, d)$'' there is a preceding ``$write(x,d)$'' in $S$ with no other ``$write(x, d')$'' in between.
    \end{minipage}%
  }
\end{center}

\subsection{Our Results}

We divide our instances to the \vscp\ problem broadly into two categories: (1) instances where there is a single thread writing to each variable, whereas multiple threads can read on the same variable (called \onewriter\ programs) and (2) instances where multiple threads can write to the same variable. The original VSC-problem is known to be $\NP$-hard even for \onewriter\ programs. Our first key result is the following. 
 
\begin{restatable}{theorem}{sconewriter}\label{th:onewriter} \vscp  admits a polynomial-time solution for \onewriter\ programs.  
\end{restatable}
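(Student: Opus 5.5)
An interleaving with at most $\pp$ preemptions has a simple shape. It is a concatenation of \emph{blocks}, where each block is a maximal contiguous run of one thread. Every thread other than the ones preempted contributes exactly one block (it runs to completion once started). The preempted threads, at most $\pp$ of them counted with multiplicity, contribute at most $\pp+1$ blocks each in total. So the interleaving is described by a small amount of "guessable" data:
\begin{itemize}
\item the set of at most $\pp$ preemption points, each given as a thread and a position in it;
\item for each preemption, which thread's segment resumes and in what relative order;
\end{itemize}
together with an ordering of the remaining whole threads. Since $\pp$ is a constant, there are $n^{O(\pp)}$ choices for the preemption points, where $n$ is the total number of operations. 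These choices cut the threads into at most $k+\pp$ segments. Among these segments, the $O(\pp)$ \emph{partial} segments can be arranged relative to each other in $O(\pp)!$ ways. What remains is to decide, for a fixed choice of cut points and fixed relative order of the partial segments, whether the unbroken threads (each an atomic block) can be inserted into the gaps so that every read is satisfied. This residual problem must be polynomial. It is the main obstacle, because $k$ is unbounded and we cannot enumerate orderings of whole threads.

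For the residual problem I will exploit the \onewriter\ property. Each variable $x$ has a unique writer thread $W_x$, so the value of $x$ at any point of the interleaving is determined solely by how far $W_x$ has progressed. A read $read(x,d)$ in thread $T\neq W_x$ is satisfied iff, at the moment it executes, the last executed write to $x$ in $W_x$ writes $d$. For a read in $W_x$ itself, the value is determined by its own program order and can be checked statically.

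Now consider an atomic block $B$ (a whole thread or a segment) executed as a unit. Each read in $B$ on a variable written by another thread imposes a constraint: the progress of the writer must lie in a specific interval of positions (between a write of $d$ and the next write to $x$). The writes in $B$ only affect other threads' reads. Because the writer of each variable moves monotonically, each block's feasibility requirement becomes an \emph{interval constraint} on the progress of each other thread. Moreover, it is a conjunction over variables whose writer progress is constant during the block; this holds unless $B$ writes the variable itself, which is the static case. So, after fixing the skeleton, the problem reduces to the following: place each whole thread (block) at a point of the global progress timeline so that, at its placement point, every other thread's progress lies in the required intervals.

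Once the skeleton of partial segments is fixed, a whole-thread block executed at a point where thread $T'$ has not started, has finished, or is at a given cut point is determined by which skeleton gap it is placed in. Since whole threads are atomic, a whole thread's progress is binary: not started or finished. This yields ordering constraints between whole threads of the form "$T$ must run before or after $T'$". These are 2-valued choices, i.e.\ essentially precedence constraints plus gap assignments. I would encode them as a 2-SAT or precedence-graph feasibility problem over the $O(k\cdot \pp)$ variables "thread $T$ is placed in gap $g$" and "$T$ before $T'$", and check acyclicity and consistency in polynomial time.

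This last step is where I expect the difficulty. One must verify that the constraints are genuinely of 2-SAT / acyclic-precedence form. The risk is that a read may admit several disjoint satisfying intervals when a value $d$ is written multiple times, producing disjunctions. I would first try to show that with atomic writer threads, the only relevant states of a whole-thread writer are "before" and "after", so each read yields a single literal or clause of size at most two. Interval disjunctions only arise for the $O(\pp)$ preempted writers, and their states are fixed by the guessed gap. Altogether this gives a running time of $n^{O(\pp)}\cdot \mathrm{poly}(n)$.
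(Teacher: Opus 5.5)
Your approach works, and it takes a different route from the paper. Both of you first guess the preemption points. The difference is where you draw the line between what is enumerated and what is placed in polynomial time, and what key lemma handles the placement.

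\textbf{The paper's split.} It enumerates only the at most $\pi$ \emph{inner} blocks (non-last segments), in $\pi!$ orders. All $k$ last blocks, including the final segments of preempted threads, are treated as \emph{outer} blocks. Because a last block freezes every variable it writes, the paper builds a conflict graph whose edges put a reader before the final block of the unique writer, and checks that it is acyclic. It then inserts the outer blocks greedily into the gaps and justifies this by an exchange argument.

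\textbf{Your split.} You put every segment of every preempted thread (at most $2\pi$ of them) into the enumerated skeleton. This costs $(2\pi)!$ orders instead of $\pi!$, still a constant. In exchange, the only objects left to place are atomic unpreempted threads, whose influence on any variable is binary. The residual problem then becomes monotone, and your worry about disjunctions disappears:
\begin{itemize}
\item Since there are no initial values, a read $\rd(x,d)$ in an unpreempted thread $T$ whose writer $W_x$ is also unpreempted gives exactly the precedence $W_x \prec T$. It also needs a static check that the last write to $x$ in $W_x$ has value $d$. Reader-before-writer constraints never arise.
\item If $W_x$ is preempted, the read gives a unary constraint that restricts $T$ to a set of gaps, possibly non-contiguous.
\item A read in a skeleton segment $s$ on a variable owned by an unpreempted $W$ restricts $W$ to the gaps before $s$.
\item Reads in the skeleton on variables of preempted writers, and all reads on a thread's own variables, are checked statically.
\end{itemize}
So you are left with arbitrary allowed gap sets $A_T$ plus a precedence graph.

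\textbf{Solving the residual problem.} Do not encode it with variables \emph{$T$ is in gap $g$}: the exactly-one-gap constraint is a long clause, so that is not 2-SAT, and linking ordering variables to gap indices is not binary either. Solve it directly instead. Reject if the precedence graph has a cycle. Otherwise, go through the threads in topological order and give each $T$ the least gap in $A_T$ that is at least the gaps of its predecessors. By induction this assignment is pointwise below any feasible one, so it succeeds whenever a solution exists. Finally, order the threads inside each gap topologically.

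\textbf{What each route buys.} The paper saves a factor in the enumeration, but its greedy placement needs a more delicate correctness argument. Your decomposition makes the polynomial-time step essentially trivial to prove correct.
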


Our second result shows that \vscp\ turns out to be $\NP$-hard when we allow multiple writers per variable, in fact even with two threads writing to a variable (called \twowriter\ programs).

\begin{restatable}{theorem}{scNPHardness}\label{th:NPhardness}
\vscp is $\NP$-complete, for all $\pi \ge 0$. Hardness holds even for \twowriter\ programs. 
\end{restatable}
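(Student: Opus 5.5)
Membership in $\NP$ is immediate: a certificate is the interleaving $S$ itself. Its length is the total number of operations. We can check in linear time that it respects each thread order, that it has at most $\pp$ preemptions, and that every $read(x,d)$ is preceded by a matching last write to $x$.

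For hardness I would reduce from 3-SAT, building a \twowriter\ instance. The key design principle is that the reduction should force a \emph{preemption-free} witness whenever one exists, so a single construction works for every $\pp$. To make this robust against an extra budget of $\pp$ preemptions, I would pad: either replicate the core gadget $\pp+1$ times on disjoint variables (so that some copy is solved without preemptions), or, more simply, design the gadget so that \emph{any} consistent interleaving of it is essentially forced to run threads to completion one after another. In the second option, a preemption would create a read with no valid writer. I plan to follow the first, cleaner route: give a construction for $\pp=0$, then show that extra preemptions can always be "absorbed" by dummy threads placed so that their use gives no advantage.

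\textbf{Core gadget for $\pp=0$.} With zero preemptions, an interleaving is just an ordering of the threads, each run atomically. The problem therefore becomes: is there a permutation of the threads such that every read sees the value of the last write preceding it? I would introduce the following threads.
\begin{itemize}
\item For each variable $v_i$, two threads $T_i$ and $F_i$, both writing to a shared location $a_i$ (two writers), with values $\mathsf{t}$ and $\mathsf{f}$ respectively. Whichever of them runs last fixes the "assignment" value of $a_i$.
\item For each clause $C_j$, a checker thread that must run after all the $T_i/F_i$ threads. This can be enforced by making the checker read a token location $s_j$ that each assignment thread updates in a chained way.
\end{itemize}
Since a checker reads one specific value, and a clause is a disjunction, I would encode the disjunction as follows. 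Each literal thread whose literal satisfies $C_j$ writes $c_j{:=}1$; clause location $c_j$ has writers drawn from only two thread classes, respecting the \twowriter\ bound, which I achieve by routing through intermediate relay threads. The checker then reads $c_j=1$. The ordering of $T_i$ versus $F_i$ must simultaneously (a) determine which of them writes last and (b) control which clause writes are visible, and the two must be consistent. Concretely, the thread that runs \emph{first} of the pair $T_i/F_i$ must have its effects overwritten, which I enforce by having the second one read a value written by the first.

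\textbf{Main obstacle.} The hard part is keeping every location written by at most two threads while clauses may contain many literals. I would first try per-occurrence locations: each literal occurrence gets its own location $c_{j,k}$, written by the literal thread and by a reset thread. The clause checker then needs "at least one of three reads succeeds", and a single thread cannot branch. I would handle this with three alternative checker threads per clause, each reading one $c_{j,k}$, combined with an ordering trick in which the non-chosen checkers can run early, before any assignment, reading initial values. Correctness, soundness and completeness, then follows by translating a satisfying assignment into a thread order and back.
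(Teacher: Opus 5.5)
\textbf{Summary.} Your membership argument and your lifting from $\pp=0$ to all $\pp$ are fine. The gap is the $\pp=0$ core gadget: you never pin it down, and the two mechanisms you sketch (assignment consistency and the disjunction) both fail.

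\textbf{What works.} Taking $\pp+1$ disjoint copies is sound: some copy contains no preempted thread, so its projection is a $0$-preemption SC interleaving of that copy. The paper only argues $\pp=0$ explicitly. Its construction happens to be robust to preemptions, but your pigeonhole step is a clean way to cover every $\pp$.

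\textbf{Assignment consistency fails.} You make the clause checkers run after \emph{all} of the $T_i/F_i$ threads, and you read the assignment off whichever of $T_i,F_i$ writes last. But the interleaving must be complete, so the clause-writing effects of \emph{both} polarities of $v_i$ must occur somewhere. If they occur before the checker, they cannot be ``overwritten'':
\begin{itemize}
\item clause locations only ever receive the value $1$;
\item writing $0$ would add writers, and could erase a legitimate $1$ coming from another variable;
\item making ``the second of $T_i,F_i$ read from the first'' either fixes their order or creates a cycle.
\end{itemize}
Separately, your token $s_j$, updated by every assignment thread, has $2k$ writers.

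The missing idea is to let the wrong polarity run harmlessly \emph{after} the check. The paper takes the assignment to be which of $S_i^0=\wt(v_i,0)$ and $S_i^1=\wt(v_i,1)$ runs \emph{before} the checker $S_f$. It ends $S_f$ with a guard $\wt(x,1)$ and adds helpers $H_i^b=\rd(x,1)\,\rd(v_i,b)$. Both helpers must follow $S_f$, yet they read different values of $v_i$, so at most one of $S_i^0,S_i^1$ precedes $S_f$. A literal thread consisting of $\rd(v_i,b)$ followed by clause writes can then feed $S_f$ only if it agrees with that assignment.

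\textbf{The disjunction encoding fails.} Using three alternative checkers per clause, with the ``non-chosen'' ones running early on initial values, is unsound. There are no initial values in this model: every read needs a preceding write in $S$. Even with an initialising writer, nothing prevents all three checkers from running early, so the clause is never actually checked.

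No branching is needed. Every literal thread writes the \emph{same} value $1$, so a single read $\rd(c_j,1)$ is satisfied by whichever true literal wrote it. The only obstacle is the writer count. The paper removes it with one relay thread $S_j=\rd(c_j,1)\,\wt(d_j,1)$ per clause: the third literal writes $d_j$ instead of $c_j$, and $S_f$ reads $d_j$ instead of $c_j$. Your remark about relay threads points this way, but you then abandon it for per-occurrence locations.
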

 
Thirdly, we prove a conditional lower bound under the Exponential Time Hypothesis (ETH). ETH states that there is no $2^{o(k)} \cdot (k + m)^{\mathcal{O}(1)}$ algorithm for 3-CNF satisfiability, with $k$ being the number of variables and $m$ the number of clauses.  

\begin{restatable}{theorem} {scFinegrained}\label{th:Finegrained}
Under ETH, there is no $2^{o(k)} \cdot n^{\mathcal{O}(1)}$ algorithm for \vscp where $k$ is the number of threads, $n$ is the total number of operations.

The result holds even when the input instance is a \threewriter\ program. 
\end{restatable}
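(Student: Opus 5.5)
We reduce from 3-CNF satisfiability with $k$ variables and $m$ clauses, producing a \threewriter\ instance of \vscp\ with $O(k)$ threads and $\mathrm{poly}(k+m)$ operations. A $2^{o(\#\text{threads})}\cdot n^{O(1)}$ algorithm for \vscp\ would then give a $2^{o(k)}\cdot (k+m)^{O(1)}$ algorithm for 3-SAT, contradicting ETH. The key point is that the thread count may depend only linearly on the number of Boolean variables, never on $m$. Since $\pp$ is a fixed constant, the gadget must also be satisfiable with very few preemptions; most naturally with $0$ preemptions, i.e.\ every thread runs to completion once started. Adding fresh threads that must be interleaved in a fixed way would handle larger $\pp$ if needed.

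\textbf{Variable gadgets.} For each Boolean variable $x_i$ we use two threads $T_i^{\top}$ and $T_i^{\bot}$. With zero (or bounded) preemptions, the serial order in which whole threads execute encodes the assignment. Say $x_i$ is true iff $T_i^{\top}$ runs before $T_i^{\bot}$. This can be forced by a shared variable $v_i$: each thread writes $v_i$ and later reads the other's value, so whichever runs second fixes the value seen by later observers.

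\textbf{Clause checking.} We add a constant number of checker threads. For each clause $C_j$ we use a variable $c_j$. The thread $T_i^{\top}$ writes $c_j$ (value ``sat'') for each clause $C_j$ containing the literal $x_i$, and $T_i^{\bot}$ does the same for clauses containing $\neg x_i$. A checker thread initially writes ``unsat'' to every $c_j$. Each literal thread first writes the neutral value and, only if it is in the ``first'' position of its pair, writes ``sat''. A final checker thread reads ``sat'' from every $c_j$. To keep the writer count at most three per variable, each $c_j$ is written only by the initializing checker and by the threads of its three literals. Since a clause has 3 literals, this is four writers, so we split $c_j$ into a chain of auxiliary variables. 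Each auxiliary variable has at most three writers, and each thread relays by reading one and writing the next.

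\textbf{Correctness.} The forward direction builds the serial schedule from a satisfying assignment: checker init, then the first thread of each pair, then the second, then the final check. This schedule uses no preemptions. The backward direction argues that any consistent interleaving with $\le\pp$ preemptions induces a consistent assignment, and that the reads of the final checker witness satisfaction of every clause. The size bounds, $O(k)$ threads and $O(k+m)$ operations per thread, are immediate.

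\textbf{Main obstacle.} The hard step is the backward direction. A "first" thread must write ``sat'' only when its literal is genuinely chosen, but a thread cannot branch on values. The gadget must therefore make every value read forced, so that the only freedom left is the relative order of the two threads in each pair. The bounded preemptions must also not let an adversarial interleaving, for example one that interleaves $T_i^{\top}$ and $T_i^{\bot}$, simulate a "both true" assignment. I would first try to handle this with handshake reads: each thread of the pair reads a flag written at the end of the other thread, to enforce that one completes before the other starts, while respecting the three-writer limit.
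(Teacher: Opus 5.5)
\textbf{There is a genuine gap: the gadget is missing, and the parts you do specify fail.}

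Your outer frame matches the paper's. You reduce 3-CNF-SAT to an instance with $O(k)$ threads, $O(k+m)$ operations and $\pp=0$, so that a $2^{o(t)}\cdot n^{O(1)}$ algorithm ($t$ the number of threads) would refute ETH. But the gadget is the whole content of the proof, and it does not work as written.

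\begin{itemize}
\item \emph{Variable gadget.} Taken literally, each of $T_i^{\top},T_i^{\bot}$ writes $v_i$ and later reads the other's value. This has no $0$-preemption SC schedule, because a thread running contiguously reads back its own write.
\item \emph{Order of a pair.} More importantly, nothing makes the order of a pair matter. Threads cannot branch, so $T_i^{\top}$ and $T_i^{\bot}$ both unconditionally write ``sat'' to their clauses. Any schedule that places both threads of every pair before the final checker therefore passes all clause checks. What matters is not the relative order of $T_i^{\top}$ and $T_i^{\bot}$, but which of them lies \emph{before the final check}.
\item \emph{Handshake flags.} Your handshake flags do not address this. As described they are cyclic, and at most they would enforce mutual exclusion, which $0$ preemptions already gives.
\item \emph{Initialization and relays.} The initializing ``unsat'' write is unnecessary: under the paper's definition, a read with no earlier matching write already fails. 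The relay chain you add to absorb the resulting fourth writer would, because relays cannot branch either, naturally cost one relay thread per clause. That is $\Theta(m)$ threads, which destroys the ETH bound. This is exactly why the paper's \twowriter\ construction, which uses relays $S_j$, yields only $\NP$-hardness.
\end{itemize}

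\textbf{The missing idea: replace branching by blocking, and enforce the choice after the check.} The paper's \threewriter\ construction is as follows.
\begin{itemize}
\item Single-write threads $S_i^b=\wt(v_i,b)$ for $b\in\{0,1\}$.
\item Literal threads: $S_{x_i}$ is $\rd(v_i,1)$ followed by $\wt(c_j,1)$ for every clause $C_j$ containing $x_i$. Symmetrically, $S_{\neg x_i}$ is guarded by $\rd(v_i,0)$.
\item A checker $S_f$ that reads $\rd(c_j,1)$ for all $j$ and ends with $\wt(x,1)$.
\item Helpers $H_i^b=\rd(x,1)\,\rd(v_i,b)$.
\end{itemize}
The helpers are forced to run after $S_f$. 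Each value of $v_i$ is written exactly once, so both helpers of $x_i$ can succeed only if at most one of $S_i^0,S_i^1$ precedes $S_f$. Hence the literal threads completed before $S_f$ agree with a single assignment, and the reads in $S_f$ certify that this assignment satisfies $\varphi$.

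This uses $6k+1$ threads. Each $c_j$ is written only by its at most three literal threads, and no initialization is needed. Soundness only compares events with the unique $\wt(x,1)$, so it holds for any number of preemptions. That disposes of your worry about adversarial interleavings.
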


Finally, we investigate the complexity when $\pp$ is not fixed and treated as part of the input. 

\begin{restatable}{theorem}{wonehardness}\label{thm:w1-hard}
When $\pp$ is not fixed, and is treated as a parameter in the input, the \vscp\ problem is W[1]-hard.
\end{restatable}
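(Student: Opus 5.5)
We prove Theorem~\ref{thm:w1-hard} by a parameterized reduction from \textsc{$k$-Clique}, which is W[1]-hard when parameterized by $k$. Given a graph $G=(V,E)$ with $V=\{v_1,\dots,v_N\}$ and an integer $k$, we build, in time polynomial in $|G|$, an instance of the VSC-problem together with a preemption budget $\pp=f(k)$ for some computable $f$ (we aim for $\pp=\mathcal{O}(k^2)$). The instance has a sequentially consistent interleaving with at most $\pp$ preemptions iff $G$ has a $k$-clique. The guiding intuition is that each preemption is a ``choice point''. A preemption taken at a particular position inside a long thread encodes the choice of a vertex or an edge. The read/write constraints then force these choices to be consistent.

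\textbf{Construction.} We use the following gadgets.
\begin{itemize}
\item \emph{Vertex-selection threads} $T_1,\dots,T_k$. Thread $T_i$ contains $N$ blocks. Block $j$ writes a value encoding ``$i$ selects $v_j$'' to a variable $s_i$ and later overwrites it.
\item \emph{Edge-checker threads} $C_{i,i'}$, one for each $i<i'$. Each checker contains reads $read(s_i,\langle i,v_j\rangle)$ and $read(s_{i'},\langle i',v_{j'}\rangle)$ offered only for pairs with $\{v_j,v_{j'}\}\in E$.
\item \emph{Synchronization (barrier) writes.} These force every $T_i$ to be paused exactly while its selection value is current. Since $T_i$ must eventually overwrite $s_i$, each checker read must happen in the window where $T_i$ is preempted at block $j$.
\end{itemize}
To let a checker ``choose'' one edge without exploring all edges, the checker must skip the unused alternatives cheaply. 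We encode each alternative edge as a segment whose reads are satisfiable either by the selection writes or by a default ``skip'' value. A skip value is available only during a global phase in which no selection is live. Running a long stretch of one thread without switching costs no preemptions, so skipping is free, while actually synchronizing on a chosen edge costs $\mathcal{O}(1)$ preemptions.

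\textbf{Correctness.} For the forward direction, take a clique $\{v_{j_1},\dots,v_{j_k}\}$. We construct the schedule explicitly. Each $T_i$ runs up to block $j_i$, writes $\langle i,v_{j_i}\rangle$, and is preempted. Each checker then runs, and its only context switches occur at the reads of its chosen edge. Counting the switches gives $\mathcal{O}(k^2)$ preemptions.

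The converse is the harder direction. The planned chain of steps is:
\begin{enumerate}
\item A budget-counting argument: each of the $\binom{k}{2}$ checkers must be interrupted at least a constant number of times.
\item Hence each $T_i$ can be paused at only one block while its checkers read it.
\item So each $T_i$ determines a unique vertex.
\item Every checker read then certifies adjacency, which yields a clique.
\end{enumerate}
We will also verify that the instance is a valid input to the VSC problem and that its size is polynomial. Note that unbounded writers are allowed here, since Theorem~\ref{th:onewriter} only concerns fixed $\pp$.

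\textbf{Main obstacle.} The crux is the soundness counting. We must rule out schedules that spend preemptions cleverly, for example by pausing $T_i$ at two different blocks for different checkers. We intend to prevent this by adding, for each $i$, a ``lock'' variable. Every checker reading $s_i$ must read this lock, and it is written once by $T_i$ before its selection phase and once after. Any second pause of $T_i$ would then need extra preemptions beyond the tight budget $\pp$. Calibrating $\pp$ exactly, so that the honest schedule fits and every cheating schedule exceeds it, is the step I expect to require the most care. If the gap is hard to control, the fallback is to reduce instead from \textsc{Multicolored Clique}. That variant gives disjoint vertex classes, which simplifies the per-$T_i$ uniqueness argument.
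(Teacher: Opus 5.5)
\textbf{Verdict: there is a genuine gap.} The two devices your argument depends on do not work as described.

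\textbf{The cheap skip cannot be built.} A VSC read has a fixed value. A segment's read $\rd(s_i,\langle i,v_j\rangle)$ is satisfied only when $s_i$ currently holds exactly $\langle i,v_j\rangle$, so no single default value can satisfy the reads of all unused segments, and threads cannot branch. Your checker contains only reads, and consecutive segments need different values of $s_i$ or $s_{i'}$. So some other thread must write between them, and the checker is preempted at least once per change of the value read. That number grows with $|V|$; it is not $\mathcal{O}(1)$. Your forward schedule also contradicts your own design: it runs the checkers after all selections are live, which is exactly when no skip value is available.

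\textbf{The tight preemption count is never established, and the lock does not give a unique pause.} $T_i$ is a fixed sequence and its chosen block is not known statically. A lock written around all $N$ blocks, or around the selection write in every block, therefore has the same value whichever block $T_i$ is paused at. Pausing $T_i$ at two blocks costs one extra preemption. Nothing in your plan shows this cannot be compensated elsewhere; for instance, how many preemptions a checker needs depends on where its chosen edge sits in its segment order. Switching to Multicolored Clique only helps with distinctness of the selected vertices, not with either problem.

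\textbf{The idea the paper uses instead.} It reduces from $k$-Independent Set with $\pp=3k$. What must then be certified is the \emph{absence} of an edge, which mutual exclusion enforces without any thread guessing an edge.
\begin{itemize}
\item Selector $\Sel_j$ has a block $\BB^j_u$ for each vertex $u$. Its $\Start^j_u$ does $\rd(y_e,0)\,\wt(y_e,j)$ and its $\Finish^j_u$ does $\rd(y_e,j)\,\wt(y_e,0)$ for every edge $e$ at $u$. Each $y_e$ therefore behaves as a lock held by at most one selector. Between $\Start^j_u$ and $\Finish^j_u$ come $\rd(s,1)\,\wt(x_j,0)\,\wt(x_j,1)$.
\item $\Checker_j$'s $\rd(x_j,0)$, against $\Init$'s $\wt(x_j,1)$, forces $\Sel_j$ to sit between $\wt(x_j,0)$ and $\wt(x_j,1)$ of some block $\BB^j_{v_j}$.
\item A global barrier supplies the simultaneity you were trying to get from counting. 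The checkers are chained through the variables $p_j$. $\Checker_1,\dots,\Checker_{k-1}$ each perform $\wt(s,0)$, and only $\Checker_k$ performs $\wt(s,1)$. So no $\rd(s,1)$ can occur in between.
\item Consequently, at the moment of $\Checker_1$'s $\wt(s,0)$, every selector is simultaneously inside its chosen block and holds all its locks. Two adjacent chosen vertices would mean two selectors holding the same $y_e$, which is impossible.
\end{itemize}
Soundness thus holds for \emph{every} SC interleaving, with no preemption bound at all. The budget $3k$ is needed only in the easy forward direction, which uses three preemptions per selector.
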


W[1] is a complexity class studied in the literature on parameterized complexity~\cite{DowneyFellows1999}. Intuitively, W[1]-hardness means we cannot expect an algorithm whose running time is of the form $f(\pp) \cdot (n + k)^{\mathcal{O}(1)}$. 

After presenting some preliminaries in Section~\ref{sec:prelims}, we present the polynomial-time solution for \onewriter\ programs in Section~\ref{sec:one-writer} and the hardness results in Section~\ref{sec:hardness}. 

\subsection{Related work}

\onewriter\ programs were considered in~\cite{rv-equivalence}, with the motivations stated as producer-consumer systems or instances from concurrent-read-exclusive-write (CREW) model. For the original VSC-problem,~\cite{rv-equivalence} presents an $\mathcal{O}(n^{k+1})$ algorithm. Hence when the number of threads is bounded, the VSC-problem for \onewriter\ programs is polynomial-time. In this work, we do not bound the number of threads. Counterparts of the VSC-problem have been studied in the setting of weak memory models~\cite{ChakrabortyKMP24}. Recently, we have studied the problem under variants of the release-acquire semantics~\cite{fm,bm-arxiv}. We obtained a similar trichotomy for the release-acquire semantics. Interestingly, for the sequential consistency memory model that we consider in this work, we obtain the same kind of trichotomy if we incorporate bounded preemptions. For the release-acquire semantics, the solution requires synthesizing what are called reads-from relations and modification-orders. Here, in SC, we need an interleaving. Therefore, even though the results paint the same picture, the techniques involved to obtain the results are completely different. 

\section{Preliminaries}
\label{sec:prelims}

We will fix a finite set $\varset$ of variables (also called memory locations) and a finite value set $\valset$ representing the domain of the variables. There are two kinds of \emph{memory operations}: reading value $d$ on variable $x$, denoted succinctly as $\rd(x, d)$, and writing value $d$ on variable $x$, denoted as $\wt(x, d)$. Let $\opset$ denote the set of memory operations. Since $\varset$ and $\valset$ are finite, $\Sigma$ is finite, and can be seen as a finite alphabet. We make use of standard notations $\Sigma^*$ (resp. $\Sigma^+$) to denote the set of words (resp. non-empty words) from the alphabet $\Sigma$. 

\noindent{\bf{Threads}} A \emph{thread} $\thread$ is a finite (non-empty) sequence of memory operations. In order to distinguish between memory operations appearing in different threads, we equip threads with a unique identifier. Throughout, we assume there are $k$ threads, for some $k \in \nats$, and each thread has an identifier in $\{1, 2, \dots, k\}$. Hence, a thread with identifier $i$ can be seen as a word in $(\{i\} \times \Sigma)^+$. An example of a thread with identifier $2$ would be: 
\begin{align*}
(2: \wt(x, 4)) (2: \rd(y, 3)) (2: \rd(x, 1))
\end{align*}
where $x, y \in \varset$ and $4, 3, 1 \in \valset$. For convenience of exposition, we will call a pair consisting of an identifier and a memory operation as an \emph{event}. Therefore, each thread is a sequence of events with each of them having the same identifier. Since it is a sequence, there is a natural \emph{program order} associated with the events appearing in a thread.

\smallskip 

\noindent{\bf{Concurrent Programs}} A \emph{concurrent program} $\program$ is given by set of threads $S_1, \dots, S_k$, with $S_i$ having identifier $i$. Let $\eventset$ be the set of all events appearing in $\prog$.  There is a natural \emph{program order} $\po$ induced on $\eventset$: for $e_1, e_2 \in \eventset$, we say $e_1~\po~e_2$ if $e_1, e_2$ belong to the same thread, and $e_1$ immediately precedes $e_2$ in the thread sequence. An \emph{interleaving} $\inter$ of $\program$ is an enumeration of $\eventset$ such that the program order is maintained. A \emph{partial interleaving} is an interleaving of some subset of $\eventset$. Since an interleaving is also a (partial) interleaving,  we present our definitions on partial interleavings.

\smallskip 

\noindent{\bf{Conflicting Pair of Read and Write}} A write $\wt(x,d)$ and a read $\rd(x,d')$ are said to be in conflict when they are on the same variable $x$ 
and $d \neq d'$.

\noindent{\bf{Sequential Consistency}} A partial interleaving $\sigma$ is \emph{sequentially consistent} (SC) if for every $\rd(x, d)$ operation appearing in $\sigma$, there is a preceding $\wt(x, d)$ in $\sigma$ with no other operation $\wt(x, d')$ ($d\neq d'$) in between. A (partial) SC interleaving is one which is sequentially consistent.

\noindent{\bf{Preemptions}} Fix a concurrent program $\prog = (S_1, \dots, S_k)$ for the rest of this section. Let $\sigma = a_1 a_2 \dots a_m$, with each $a_j \in \eventset$, be a partial interleaving of $\prog$. We say that a \emph{preemption} occurs at position $j \in \{1, \dots, m-1\}$ if $a_j$ and $a_{j+1}$ belong to different threads, and $a_j$ is not the last event of its thread. For $\pp \in \nats$, we say that $\sigma$ has $\pp$ preemptions if there are exactly $\pi$ positions in $\sigma$ where preemptions occur. Figure~\ref{fig-onewriter-example} shows some examples.

\begin{figure}[h]
   \def\ystep{0.5}

\centering
	\scalebox{.7}{
\begin{tikzpicture}[yscale=1]
\node (t01) at (-.2, 1.5*\ystep) {$P_1$};
\node (t02) at (1.5, 1.5*\ystep) {$P_2$};
\node (t03) at (3.2, 1.5*\ystep) {$P_3$};

  \node (t11) at (-.2, 0*\ystep) {$1:\wt(x,1)$};
    \node (t12) at (1.5, 0*\ystep) {$2:\rd(x,2)$};
  
    \node (t13) at (3.2, 0*\ystep) {$3:\rd(x,1)$};
    
    \node (t21) at (-.2, -2*\ystep) {$1:\wt(x,2)$};
    \node (t22) at (1.5, -2*\ystep) {$2:\wt(y,1)$};

    \node (t31) at (-.2, -4*\ystep) {$1:\rd(y,1)$};

    \draw[po] (t11) -- (t21);
    \draw[po] (t21) -- (t31); 
    
    \draw[po] (t12) -- (t22);
    \end{tikzpicture}}
    
     \caption{The partial interleaving $(1:w(x,1))(3:r(x,1))(1:w(x,2))$ has only one preemption  at $(1:w(x,1))$ while the interleaving $(1:w(x,1))(3:r(x,1))(1:w(x,2)) (2:r(x,2)) (2:w(y,1)) (1:r(y,1))$ has two preemptions at $(1:w(x,1))$ and $(1:w(x,2))$.}\label{fig-onewriter-example}
     
\end{figure}

\noindent{\bf{Classification of programs}} In our analysis, we classify programs based on the number of threads that write to a variable. A concurrent program is said to be \onewriter\ if for every variable $x$, there is a single thread with write operations of the form $\wt(x, d)$. Multiple threads can have read operations $\rd(x, d)$ though. Similarly, a concurrent program is said to be \twowriter\ (resp. \threewriter) if there are at most two (resp. three) threads with write operations to each variable. Figures~\ref{fig-onewriter-example} and \ref{fig:example-algo} give examples of \onewriter\ programs.

\noindent{\bf{Problem Statement.}}
We now reformulate our problem of interest, based on the notions defined above.  

\begin{center}
  \fbox{%
    \begin{minipage}{0.98\textwidth}
      \noindent \textsc{\textbf{VSC$^\pp$-problem}: Verifying Sequential Consistency under $\pp$ preemptions} 

      \medskip

      \noindent \textbf{Fixed:} A constant $\pp \in \mathbb{N}\cup\{0\}$

      \medskip 

      \noindent \textbf{Instance:} A concurrent program $\program$. 

      \medskip

      \noindent \textbf{Question:} Is there an interleaving of $\program$ containing at most $\pp$ preemptions which is sequentially consistent?
    \end{minipage}%
  }
\end{center}
As we show later, the complexity of this problem differs based on whether the program of interest is \onewriter{} or \twowriter{} or \threewriter. 

\section{Polynomial-time algorithm for \onewriter\ programs}
\label{sec:one-writer}
We begin our analysis with \onewriter{} programs, and the  main result of this section 
 is to prove the following theorem. 
\sconewriter*

For the rest of this section, fix a \onewriter\ program $\program$ with $k$ threads having a total of $n$ events across all threads, 
and a constant $\pp \geq 0$. Here is an overview of our algorithm. 
\begin{description}
\item[Step 1.] Guess a set $\points$ of at most $\pi$ events.
\item[Step 2.] Decide whether there exists an SC interleaving of $\program$ where the preemptions occur exactly after the events in $\points$.
\end{description}
Step 1 can be done in $\mathcal{O}(n^{\pp})$.  For Step 2, we present an algorithm that has complexity $\mathcal{O}(n^3 \cdot k^2 \cdot \pp \cdot \pp!)$.  Hence, the overall complexity  comes to $\mathcal{O}(n^{\pp + 3} \cdot k^2 \cdot \pp \cdot \pp!)$. Since $\pi$ is fixed, we get a polynomial-time algorithm for the \vscp\ problem for \onewriter\ programs. We now present an overview of our algorithm through examples. Technical details are presented in Appendix~\ref{app:ptime}.

 \begin{figure}[t]
 \centering
\begin{tikzpicture}[box/.style={rectangle, rounded corners, fill=gray!10, draw}]
  \node [box]  (a) at (0,0) {\scriptsize Program};
  \node [box] (b) at (3,0) {\scriptsize Block-program};
  \node [box] (c) at (6.2,0) {\scriptsize \begin{tabular}{c} Build \textbf{conflict graph} \\ on outer blocks \end{tabular}};
  \node [box] (d) at (9.5,1) {\scriptsize \begin{tabular}{c} Pick permutation \\ of inner blocks \end{tabular}};
  \node [box] (e) at (9.5,-1) {\scriptsize \begin{tabular}{c} Place outer blocks \\ suitably based on \\ conflict order \end{tabular}};

  \begin{scope}[->]
   \draw [>=stealth] (a) to node [above] {\scriptsize \begin{tabular}{c} \textbf{Step 1} \\ choose $\pi$ \\  preemption  \\ points \end{tabular}} (b);
   \draw [thick, dashed, gray] (b) to (c);
   \draw [thick, dashed, gray] (c) to (d);
   \draw [thick, dashed, gray] (d) to (e);
  \end{scope}

  \node at (3, -0.6) {\scriptsize \textbf{Step 2.1}};
  \node at (6.2, -0.8) {\scriptsize \textbf{Step 2.2}};
  \node at (9.5, 1.7) {\scriptsize \textbf{Step 2.3}};
  \node at (9.5, -1.8) {\scriptsize \textbf{Step 2.4}};
\end{tikzpicture}
\caption{Schema for the algorithm for \onewriter\ programs}
\label{fig:schema-one-writer}
 \end{figure}

 \subsection{Overview of our procedure}

Figure~\ref{fig:schema-one-writer} presents the overall schema. In Step 1, we choose at most $\pp$ points in the program where preemptions are guessed to occur. Step 2 can be further broken down into 4 steps, which we will now explain using the examples presented in Figure~\ref{fig:example-algo}. In the figure, there are two programs $\program_1$ and $\program_2$. For $\program_1$, Step 1 guesses the preemption points to be $\wt(y,1)$ from $S_2$,  $\wt(x,2)$ from $S_3$ and the first $\rd(x,2)$ from $S_4$. For $\program_2$, the guessed preemption points are $\wt(x,1)$ from $S_1$ and $\wt(y,1)$ from $S_2$.

\noindent{\bf{Step 2.1. A Block-Program from the Blocks.}}  Using these points, the program can be divided into blocks as shown in the right of Figure~\ref{fig:example-algo}. In any interleaving where the preemptions occur at the chosen points, the operations within a block appear contiguously. Therefore, selecting an interleaving with these preemption points amounts to finding a suitable interleaving of the blocks, where sequential consistency is maintained. Observe that when $\pp$ preemption points are chosen there are totally $\pp + k$ blocks (recall that $k$ is the number of threads). We distinguish the blocks into two kinds. The \emph{outer} blocks (coloured in green, in Figure~\ref{fig:example-algo}) are the last blocks in each thread. There are $k$ of them in total. The rest of the blocks are \emph{inner} blocks (coloured in violet, in Figure~\ref{fig:example-algo}). Each inner block ends with a chosen preemption point, and hence there are $\pp$ inner blocks in total.Enumerating all permutations of $\pp + k$ blocks will result in a $(\pp + k)!$ factor in the complexity, which is not polynomial-time when $\pp$ is the only fixed parameter.  To get rid of the factor $k$ inside the factorial, we proceed as follows.

\noindent{\bf{Step 2.2. Order Outer Blocks via Conflict Analysis.}} %
 We show that an ordering among the outer blocks can be determined by a \emph{conflict graph}: outer blocks $b_r$ and $b_w$ are in conflict if there is a variable $x$ for which the last write in $b_w$ is of the form $\wt(x, d')$ and there is a read in $b_r$ of the form $\rd(x, d)$ with $d \neq d'$. If $b_w$ appears before $b_r$,  then $\rd(x,d)$ cannot read the respective value $d$ in any interleaving, since $b_w$ is the last block of the \emph{unique} thread writing on $x$. Hence we draw an edge $b_r \to b_w$ indicating that $b_r$ should be placed before $b_w$. In Figure~\ref{fig:example-algo}, the blue edges denote this ordering induced by the conflict: for $\program_1$, $\rd(x, 2)$ should appear before $\wt(x, 3)$; for $\program_2$, there are two induced orderings due to the different variables $x$ and $y$. Clearly, if the ordering includes a cycle (as in $\program_2$),  there is no sequentially consistent interleaving for this set of blocks. If there is no such cycle (for instance, $\program_1$), we move to Step 2.3.

\begin{figure}[t]
 \centering
 \begin{tikzpicture}
\begin{scope}
 \node at (0,0) {\scriptsize $S_1$};
 \node at (0,-0.6) {\scriptsize $\rd(y,2)$};
 \node at (1.2, 0) {\scriptsize $S_2$};
 \node at (1.2, -0.6) {\scriptsize $\wt(y,1)$};
 \node at (1.2, -1.2) {\scriptsize $\wt(y,2)$};
 \node at (2.4, 0) {\scriptsize $S_3$};
 \node at (2.4, -0.6) {\scriptsize $\wt(x,1)$};
 \node at (2.4, -1.2) {\scriptsize $\wt(x,2)$};
 \node at (2.4, -1.8) {\scriptsize $\wt(x,3)$};
 \node at (3.6, 0) {\scriptsize $S_4$};
 \node at (3.6, -0.6) {\scriptsize $\rd(x, 2)$};
 \node at (3.6, -1.2) {\scriptsize $\rd(x, 2)$};
\end{scope}

\begin{scope}[xshift=6.5cm, outer/.style={rectangle, rounded corners, fill=green!20, draw}, inner/.style={rectangle, rounded corners, fill=violet!20, draw}]
\node at (0,0) {\scriptsize $S_1$};
 \node [outer] at (0,-0.6) {\scriptsize $\rd(y,2)$};
 \node at (1.2, 0) {\scriptsize $S_2$};
 \node [inner] at (1.2, -0.6) {\scriptsize $\wt(y,1)$};
 \node [outer] at (1.2, -1.2) {\scriptsize $\wt(y,2)$};
 \node at (2.4, 0) {\scriptsize $S_3$};
 \draw [rounded corners, fill=violet!20] (1.9, -0.4) rectangle (2.9, -1.4);
 \node  at (2.4, -0.6) {\scriptsize $\wt(x,1)$};
 \node  at (2.4, -1.2) {\scriptsize $\wt(x,2)$};
 \node [outer] (b) at (2.4, -1.8) {\scriptsize $\wt(x,3)$};
 \node at (3.6, 0) {\scriptsize $S_4$};
 \node [inner] at (3.6, -0.6) {\scriptsize $\rd(x, 2)$};
 \node [outer] (a) at (3.6, -1.2) {\scriptsize $\rd(x, 2)$};
\end{scope}

\draw [->, thick, blue] (a) to [bend left] (b);

\draw (-0.5,-2.2) -- (11,-2.2);

\node [left, thick, gray] at (-0.1, 0) {\scriptsize $\program_1$:};
\node [left, thick, gray] at (-0.1, -2.5) {\scriptsize $\program_2$:};

\begin{scope}[yshift=-2.5cm]
\node at (1, 0) {\scriptsize $S_1$};
\node at (1, -0.6) {\scriptsize $\wt(x,1)$};
\node at (1, -1.2) {\scriptsize $\rd(y,1)$};
\node at (1, -1.8) {\scriptsize $\wt(x,2)$};

\node at (2.6, 0) {\scriptsize $S_2$};
\node at (2.6, -0.6) {\scriptsize $\wt(y,1)$};
\node at (2.6, -1.2) {\scriptsize $\rd(x,1)$};
\node at (2.6, -1.8) {\scriptsize $\wt(y,2)$};

\end{scope}

\begin{scope}[xshift = 6.5cm, yshift=-2.5cm, outer/.style={rectangle, rounded corners, fill=green!20, draw}, inner/.style={rectangle, rounded corners, fill=violet!20, draw}]
\node at (1, 0) {\scriptsize $S_1$};
\node [inner] at (1, -0.6) {\scriptsize $\wt(x,1)$};
\draw [rounded corners, fill = green!20] (0.5, -1) rectangle (1.5, -2);
\node (c) at (1, -1.2) {\scriptsize $\rd(y,1)$};
\node (e) at (1, -1.8) {\scriptsize $\wt(x,2)$};

\node at (2.6, 0) {\scriptsize $S_2$};
\node [inner] at (2.6, -0.6) {\scriptsize $\wt(y,1)$};
\draw [rounded corners, fill = green!20] (2.1, -1) rectangle (3.1, -2);
\node (d) at (2.6, -1.2) {\scriptsize $\rd(x,1)$};
\node (f) at (2.6, -1.8) {\scriptsize $\wt(y,2)$};

\draw [thick, blue, ->] (c) to [bend left] (d);
\draw [thick, blue, ->] (f) to [bend left] (e);
\end{scope}

\end{tikzpicture}
\caption{Programs $\program_1$ and $\program_2$ on the left, their corresponding block-programs on the right w.r.t. chosen preemption points. Inner blocks are coloured in violet, and outer blocks in green. The blue edges denote edges in the conflict graph of the outer nodes.}
\label{fig:example-algo}
\end{figure}

\begin{figure}[t]
\centering
\begin{tikzpicture}[outer/.style={rectangle, rounded corners, fill=green!20, draw}, inner/.style={rectangle, rounded corners, fill=violet!20, draw}]

 \node [inner] (a1) at (0,1.5) {\scriptsize $\wt(x,1)~~\wt(x,2)$};

  \node at (-1.2, 1.6) {\scriptsize $0$};

  \node [inner] (b1) at (2,1.5) {\scriptsize $\wt(y,1)$};

  \node at (1.3, 1.6) {\scriptsize $1$};

  \node at (2.8, 1.6) {\scriptsize $2$};

  \node [inner] (c1) at (6.5,1.5) {\scriptsize $\rd(x, 2)$};
 
  \node at (7.2, 1.6) {\scriptsize $3$};

  \node [inner] (a) at (0,0) {\scriptsize $\wt(x,1)~~\wt(x,2)$};
  \draw [thick] (-1.2, 0) -- (-1.2, -0.2);
  \node at (-1.2, -0.4) {\scriptsize $0$};

  \node [inner] (b) at (2,0) {\scriptsize $\wt(y,1)$};
\draw [thick] (1.3, 0) -- (1.3, -0.2);
  \node at (1.3, -0.4) {\scriptsize $1$};

  \draw [thick] (2.8, 0) -- (2.8, -0.2);
  \node at (2.8, -0.4) {\scriptsize $2$};

  \node [inner] (c) at (6.5,0) {\scriptsize $\rd(x, 2)$};
  \draw [thick] (7.2, 0) -- (7.2, -0.2);
  \node at (7.2, -0.4) {\scriptsize $3$};

  \node [outer] at (3.5, 0) {\scriptsize $\wt(y, 2)$};
  \node [outer] at (5, 0) {\scriptsize $\rd(y, 2)$};
  \node [outer] at (8, 0) {\scriptsize $\rd(x,2)$};
  \node [outer] at (9.5, 0) {\scriptsize $\wt(x,3)$};

\end{tikzpicture}
\caption{A permutation of inner blocks (above), and an interleaving obtained by 
inserting outer blocks at appropriate positions (below).}
\label{fig:example-algo-final-step}
\end{figure}

\noindent{\bf{Step 2.3. Insert Outer Blocks into a Permutation of Inner Blocks.}} Consider all possible permutations of the inner blocks. Since there are at most $\pp$ in number, there are at most $\pp!$ permutations. The goal is to check if we can place the $k$ outer blocks between these inner blocks so as to obtain an SC interleaving. In Figure~\ref{fig:example-algo-final-step}, the top illustrates an ordering for inner blocks.  We have also marked the positions around the inner blocks (in the figure, they are marked as $0, 1, 2$ and $3$). The outer blocks need to be placed appropriately.

\noindent{\bf{Step 2.4. Placement of Outer Blocks respecting conflicts.}} Continuing with our running example $\program_1$ in Figure~\ref{fig:example-algo-final-step}, at position $0$, none of the outer blocks is enabled: the outer block in $S_1$ requires a writing source, and the outer blocks in the other threads cannot be scheduled unless earlier blocks in the program order are completed. So we move to position $1$. At position $1$, the outer block $\wt(x, 3)$ of $S_3$ is enabled since the inner block $\wt(x,1).\wt(x,2)$ of $S_3$ 
appears already.  However, we cannot place it at position 1 for two reasons: firstly, it conflicts the inner block $\rd(x, 2)$ that appears in front of position 3 to the right; and secondly there is an outer $\rd(x,2)$ block which needs to appear before $\wt(x, 3)$ according to the ordering induced by the conflict. Therefore, we do not place it at position $1$ and move on to position $2$. At position $2$, the enabled outer blocks are $\wt(y, 2)$ and $\wt(x, 3)$. We place $\wt(y, 2)$ and not $\wt(x, 3)$ for the same reasons as above. Once $\wt(y, 2)$ is placed, the outer block $\rd(y, 2)$ is also enabled. We place it and move on to position $3$. Here, both $\wt(x, 3)$ and the outer block $\rd(x, 2)$ are enabled. Based on the conflict ordering, we place $\rd(x, 2)$ followed by $\wt(x, 3)$. 

Thus, at each position in a given permutation of inner blocks,
 we find the enabled outer blocks among the ones yet to be scheduled, and  place them according to the ordering induced by the conflicts. Moreover, we place an outer block at a position only if it does not conflict any inner block appearing later. Placing an outer block at a position may enable more outer blocks. We finish with the position only when there are no more outer blocks that can be placed at that position.  

\noindent{\bf{A final check}.} If some outer blocks cannot be placed, we conclude that an SC interleaving is not possible for the chosen permutation of the inner blocks. Otherwise, all outer blocks have been placed. Still, we need to do a final check to ensure SC. This is because, we had chosen an arbitrary permutation of inner blocks. There could be reads in inner blocks with no writing source, even after the placement of outer blocks; and secondly there could be reads in inner blocks which contradict later reads in an inner block, even after the outer blocks are all placed. If the interleaving satisfies SC, we are done. Else, we move to another permutation of inner blocks, until we exhaust all of them.

\noindent{\bf{Complexity}.} Recall that $n$ is the number of events, $k$ is the number of threads and $\pp$ the number of preemptions. Step 2.1 takes $\mathcal{O}(n)$. Step 2.2 builds a conflict ordering among $k$ outer blocks and checks for a cycle. This process takes at most $\mathcal{O}(k \cdot n)$. Once an enumeration of inner blocks is chosen, Step 2.4 can be done in $\mathcal{O}(\pp \cdot k \cdot n)$: at each position, we manipulate at most $k$ blocks, and for each block we may do an $\mathcal{O}(n)$ analysis. The final check takes $\mathcal{O}(n)$, and hence Step 2.4 requires an overall complexity $\mathcal{O}(\pp \cdot k \cdot n)$. Since there are $\pp!$ enumerations, the complexity of Step 2 comes to $\mathcal{O}(n +  k \cdot n  +  \pp! \cdot \pp \cdot k \cdot n)$, which is $\mathcal{O}(n \cdot k \cdot \pp \cdot \pp!)$. Combining with the $\mathcal{O}(n^\pi)$ complexity of Step 1, we get an overall complexity of $\mathcal{O}(n^{\pp+1} \cdot k\cdot \pp \cdot \pp!)$.

\subsubsection*{Why the conflict analysis does not work for inner blocks and \twowriter\ programs?}

Let us once again consider the $\program_1$ example from Figure~\ref{fig:example-algo}. There $\rd(x,2)$ is in conflict with $\wt(x,3)$. If $\wt(x, 3)$ is not an outer block, it may still be possible to have an interleaving where $\wt(x,3)$ appears before $\rd(x,2)$ as long as there is a later block below $\wt(x,3)$ with a matching $\wt(x,2)$. So, there is no forced order. A similar phenomenon happens for \twowriter\ systems too. We cannot force an order between two blocks just because there is conflict in the read and write. This leads to a blow-up in the permutations, and as we show in the next section, when there are two-writers, the \vscp\ problem is $\NP$-hard already with $\pp = 0$, exemplifying the crucial advantage of the conflict analysis in \onewriter\ systems.

We present the formal details of our algorithm in Appendix~\ref{app:ptime}.

\section{Hardness results}
\label{sec:hardness}
 In Section~\ref{sec:one-writer}, we showed that the \vscp\ problem for \onewriter\ programs can be solved in polynomial-time, with a complexity $\mathcal{O}(n^{\pp + 1} \cdot k \cdot \pp \cdot \pp!)$. Here are two observations about this complexity: the exponent on $k$ is a constant, whereas the exponent on $n$ is the parameter $\pp$. We remark that to achieve the polynomial-time complexity, we had restricted to \onewriter\ programs and also assumed that the parameter $\pp$ is fixed. Our goal in this section is to investigate what happens when we perturb these restrictions. Figure~\ref{fig:results-overview} presents an overview of the complexity results.

\begin{figure}[h]
\centering
\begin{tikzpicture}[box/.style={rectangle, fill=gray!10, draw, inner sep=2pt}]
  \node[box] (a) at (0,0) {\footnotesize \begin{tabular}{c} \onewriter \\ fixed $\pp$ \end{tabular}};
  \node[box] (b) at (-4,-1.5) {\footnotesize \begin{tabular}{c} multiple writers \\ fixed $\pp$ \end{tabular}};
  \node[box] (c) at (0, -1.5) {\footnotesize \begin{tabular}{c} \onewriter \\ $\pp$ not fixed \end{tabular}};
  \node[box] (d) at (4, -1.5) {\footnotesize \begin{tabular}{c} multiple writers \\ $\pp$ not fixed \end{tabular}};
  
  \begin{scope}[->]
    \draw (a) to (b);
    \draw (a) to (c);
    \draw (a) to (d);
  \end{scope}

  \node at (2.5, 0) {\footnotesize \textsc{Ptime} (Section~\ref{sec:one-writer})};
  \node at (-4, -2.5) {\footnotesize \begin{tabular}{c} \textsc{NP-complete}  \\ (Sections~\ref{sec:finegrained},\ref{sec:np}) \end{tabular}};
  \node at (0, -2.5) {\footnotesize \textsc{NP-complete}~\cite{Gibbons}};
  \node at (4, -2.5) {\footnotesize \begin{tabular}{c} \textsc{W[1]-hard} \\ (Section~\ref{sec:w1-hard}) \end{tabular}};
\end{tikzpicture}
\caption{Illustration of the various complexity results}
\label{fig:results-overview}
\end{figure}

\paragraph*{Multiple writers, fixed $\pp$.} When multiple writers are allowed, \vscp\ becomes $\NP$-complete, even with \twowriter\ programs (Theorem~\ref{th:NPhardness}). Furthermore, under the Exponential-Time-Hypothesis~\cite{ImpagliazzoPZ01}, we prove that \vscp\ cannot have a $2^{o(k)} \cdot n^{\mathcal{O}(1)}$ algorithm, even for $0$ preemptions and \threewriter\ programs (Theorem~\ref{th:Finegrained}). Contrast this with \onewriter\ where the exponent on $k$ was a constant $1$, thanks to the conflict analysis. Our hardness results therefore give strong evidence that the presence of multiple writers may introduce an unavoidable exponential on the number of threads $k$. 

\paragraph*{One writer, $\pp$ not fixed.} Now, if we continue with \onewriter\ but assume that $\pp$ is also part of the instance input, then Theorem~4.3 of \cite{Gibbons} shows $\NP$-completeness. In their proof, an instance of 3-CNF-SAT with $k$ variables and $m$ clauses is reduced to a \onewriter\ program, where $\pp$ can be taken to be $\mathcal{O}(m + k)$. This is sufficient to prove $\NP$-hardness, when $\pp$ is not fixed. This hardness result provides evidence that having either $n^\pp$ or $k^\pp$ or some exponential function with $\pp$ could be unavoidable -- in our algorithm for Section~\ref{sec:one-writer}, we have $n^\pp$ and $\pp!$ in the running time. 

\paragraph*{Multiple writers, $\pp$ not fixed.} Finally, when we consider multiple writers, and do not fix $\pp$, we show W[1]-hardness (Section~\ref{sec:w1-hard}). W[1] is a complexity class studied in parameterized complexity, which intuitively provides evidence that we cannot have an algorithm with running time $f(\pp) \cdot (k + n)^{\mathcal{O}(1)}$ for our problem when multiple writers are allowed. In other words, we cannot separate out the influence of $\pp$ from $n$ and $k$. In contrast, consider the standard \emph{vertex cover} problem for graphs: given a graph of size $n$ is there a vertex cover of size $\le k$? This is an $\NP$-complete problem, but it has an algorithm of the form $f(k) \cdot n^{\mathcal{O}(1)}$ and hence is called fixed-parameter-tractable (FPT). On the other hand, deciding whether a graph of size $n$ has an \emph{independent set} of size $\ge k$ is known to be W[1]-hard (hence unlikely to be FPT). In Section~\ref{sec:w1-hard}, we reduce the independent set problem to our VSC problem, where the $k$ of the independent set translates to the $\pp$ of our VSC problem. We refer the reader to~\cite{DowneyFellows1999} for an exposition of parameterized complexity classes.

We now move on to present the hardness results. Firstly, we note that \vscp\ is in $\NP$, since we can guess an interleaving and check sequential consistency and the preemption bound $\pp$ in linear time. 
We present two hardness results for the case of fixed $\pp$. To show hardness, we provide reductions from 3-CNF-SAT. Given a Boolean formula $\varphi$ over variables $x_1, \dots, x_k$ and clauses $C_1, \dots, C_m$, where each clause is a disjunction of three literals, we ask whether $\varphi$ admits a satisfying assignment. We will start with the reduction to \threewriter\ systems and then adapt the reduction to \twowriter\ systems. The constructions for both the reductions are adapted from Theorem 2.1 of~\cite{Gibbons}. Let us first recall the Exponential Time Hypothesis that we make use of, to deduce a conditional lower bound.

\textbf{Exponential Time Hypothesis (ETH).}~\cite{ImpagliazzoPZ01}
There is no $2^{o(k)}\cdot (k+m)^{\mathcal{O}(1)}$ algorithm for 3-CNF-SAT, where $k$ is the number of variables and $m$ is the number of clauses.

For the reduction to $\threewriter$, we construct a concurrent program consisting of $\mathcal{O}(k)$ threads, i.e., linear in the number of variables of the 3-CNF-SAT instance, and no dependence on the number of clauses $m$. This yields a parameterized reduction and implies the desired conditional lower bound under ETH. In contrast, the reduction to $\twowriter$ uses $\mathcal{O}(k+m)$ threads and is therefore, not a parameterized reduction. Sections~\ref{sec:finegrained} and~\ref{sec:np} describe these reductions in detail.

\paragraph{Notation.}
Let $\varphi$ be a 3-CNF formula over variables $x_1, \dots, x_k$ and clauses $C_1, \dots, C_m$. Each clause $C_j$ consists of three literals $(\ell_j^1, \ell_j^2, \ell_j^3)$, where each literal is either a variable $x_i$ or its negation $\neg x_i$.

For each variable $x_i$, define:
\[
S_{x_i} := \{ j \mid C_j \text{ contains } x_i \}, \qquad
S_{\neg x_i} := \{ j \mid C_j \text{ contains } \neg x_i \}.
\]

Note that for ease of presentation, we omit thread identifiers in events when they are clear from the context.

\subsection{Fine-grained Hardness under ETH for \threewriter}\label{sec:finegrained}

Given a 3-CNF-SAT instance $\varphi$ in $k$ variable and $m$ clauses, we  construct a \vscp~ instance $\program$ in $\mathcal{O}(k)$ threads and $\mathcal{O}(k+m)$ operations, in $\mathcal{O}(k + m)$ time, such that for every variable only three threads write to it. Therefore, a $2^{o(k)} \cdot n^{\mathcal{O}(1)}$ algorithm for \vscp\ would contradict ETH.
 
\scFinegrained*

Say $\varphi=C_1\land C_2\land \dots \land C_m$. 
The construction for the reduction from $\varphi$, illustrated in Table~\ref{tab:reduction-threads-scmm} shows a program $\program_{\varphi}$ with $6k+1$ threads and $k+m+1$ variables. Variables $v_i, i=1,2,\dots,k$ correspond to $k$ variables of $\varphi$ while variables $c_i, i=1,2,\dots m$ correspond to the clauses. The variable $x$ is a guard to ensure all clause variables $c_i$ are set. Variable $x$ is written only once in $S_f$ while threads writing to $v_i$ are $S_i^0$ and $S_i^1$. Variable $c_i$ is written by $S_{x_i}$ if the clause $C_i$ contains $x$. If clause $C_i$ contains $\neg v_i$, then $c_i$ is written by $S_{\neg x_i}$. As a clause can have at most 3 literals in a 3-CNF-SAT instance, each $c_i$ will be written by at most 3 threads (hence we get a \threewriter\ program). We also pick $\pp$ to be $0$.

\begin{table}[h]
    \caption{Threads in the reduction to \threewriter\ programs}
    \label{tab:reduction-threads-scmm}
    \noindent
    \begin{minipage}{0.38\textwidth}
    \centering
    \begin{tabular}{c | c | c | c}
        $S_i^0$ & $H_i^0$ & $S_i^1$ & $H_i^1$ \\
        $\wt(v_i, 0)$ & $\rd(x, 1)$ & $\wt(v_i, 1)$ & $\rd(x, 1)$ \\
        & $\rd(v_i, 0)$ & & $\rd(v_i, 1)$ 
    \end{tabular}
    \end{minipage}
    \hfill
    \begin{minipage}{0.32\textwidth}
    \centering
    \begin{tabular}{c | c }
        $S_{x_i}$ & $S_{\neg x_i}$ \\
        $\rd(v_i, 1)$ & $\rd(v_i, 0)$ \\
        $\wt(c_j, 1)$ & $\wt(c_j, 1)$ \\
        $\forall j \in C_{x_i}$ & $\forall j \in C_{\neg x_i}$ 
    \end{tabular}
    \end{minipage}
    \hfill
    \begin{minipage}{0.22\textwidth}
    \centering
    \begin{tabular}{c}
        $S_f$ \\
        $\rd(c_1, 1)$ \\
        $\cdots$ \\
        $\rd(c_m, 1)$ \\
        $\wt(x, 1)$
    \end{tabular}
    \end{minipage}
\end{table}

Here is the main idea. In any (complete) $0$-preemption interleaving of $\program_\varphi$, the \emph{helper} threads $H_i^0$ and $H_i^1$ can be executed only after $S_f$ (due to the read-write on variable $x$). This ensures that only one of $S_i^0$ or $S_i^1$ can be executed before $S_f$: otherwise, if both are executed above $S_f$, only the helper thread corresponding to the last write on $v_i$ can be executed after $S_f$. Therefore, since a unique thread between $S_i^0$ or $S_i^1$ appears before $S_f$, we obtain an assignment for the variables. Moreover, as all the clauses have been set to true (thanks to $\rd(c_i, 1)$ in $S_f$), we can conclude that the assignment is satisfying. This leads to the following lemma, whose proof appears in Appendix~\ref{app:hardness}.

\begin{restatable}{lemma}{threeWriterLemma}\label{lem:eth-sc}
    $\varphi$ is satisfiable iff there exists a sequentially consistent partial interleaving $\sigma$ of $\program_\varphi$ with $0$ preemptions.
\end{restatable}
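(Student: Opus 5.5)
The plan is to prove both directions directly, reading ``partial interleaving'' in the lemma as an interleaving of all events of $\program_\varphi$. The tool used throughout is the following observation. With $0$ preemptions, once a thread is started it must run to completion before any other thread executes. Indeed, a switch away from a thread before its last event is a preemption. Hence a $0$-preemption interleaving is just a concatenation of whole threads in some order, and the problem reduces to choosing an order of the threads.

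($\Rightarrow$) Let $\alpha$ be a satisfying assignment, and write $\ell_i$ for the literal of $x_i$ made true by $\alpha$ and $\bar\ell_i$ for the false one. I would output the threads in five phases.
\begin{enumerate}
\item For each $i$, output $S_i^{\alpha(x_i)}$.
\item For each $i$, output $S_{\ell_i}$. Its read of $v_i$ sees $\alpha(x_i)$, and it writes $1$ to the clause variables of its clauses.
\item Output $S_f$. Every clause contains a true literal, and all writes to any $c_j$ have value $1$, so every $\rd(c_j,1)$ is satisfied; $S_f$ then writes $x=1$.
\item For each $i$, output $H_i^{\alpha(x_i)}$; it reads $x=1$ and $v_i=\alpha(x_i)$.
\item For each $i$, output $S_i^{1-\alpha(x_i)}$, then $H_i^{1-\alpha(x_i)}$, then $S_{\bar\ell_i}$. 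Each of these reads sees the most recent write to $v_i$, which is $1-\alpha(x_i)$. The writes to $c_j$ here have value $1$, which is harmless.
\end{enumerate}
Every thread appears contiguously and all reads are matched, so this is a sequentially consistent interleaving with $0$ preemptions.

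($\Leftarrow$) Fix a sequentially consistent $0$-preemption interleaving $\sigma$; by the observation above it is a sequence of whole threads.
\begin{itemize}
\item Every helper thread begins with $\rd(x,1)$, and the only write to $x$ is the last event of $S_f$. So all helpers run after $S_f$.
\item Key step: for each $i$, at most one of $S_i^0,S_i^1$ runs before $S_f$. Suppose both do. After $S_f$ there are no further writes to $v_i$, so both $H_i^0$ and $H_i^1$ read the same last value of $v_i$. One of their reads $\rd(v_i,0)$, $\rd(v_i,1)$ must then fail, a contradiction.
\item Define $\alpha(x_i)$ as the value written by the unique writer of $v_i$ that precedes $S_f$, and choose it arbitrarily if there is none. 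Before $S_f$ all writes to $v_i$ carry the value $\alpha(x_i)$.
\item Any literal thread $S_{x_i}$ or $S_{\neg x_i}$ completed before $S_f$ therefore read $v_i=\alpha(x_i)$. Hence its literal is true under $\alpha$.
\item $S_f$ reads $c_j=1$ for every $j$, so each $c_j$ was written before $S_f$. The writer is necessarily a literal thread of a literal in $C_j$, and since threads run whole, that thread completed before $S_f$. Thus every clause contains a literal true under $\alpha$, and $\alpha$ satisfies $\varphi$.
\end{itemize}

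The main obstacle is the uniqueness step: showing that two writers of $v_i$ cannot both precede $S_f$. This argument relies on $0$ preemptions to force the helpers entirely after $S_f$, so that the helpers act as witnesses to a single final value of $v_i$.
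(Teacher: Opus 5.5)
Your proposal is correct and follows the paper's proof essentially step for step. For ($\Rightarrow$) you use the same five-phase ordering of whole threads. For ($\Leftarrow$) you use the same argument: helpers must follow $S_f$, so at most one writer of each $v_i$ can precede $S_f$, and this defines the assignment. Your definition of $\alpha$ as the value written by that unique preceding writer spells out a step the paper compresses into ``by definition of $A$, this literal evaluates to true.'' So does your explicit remark that any literal thread before $S_f$ must have read $v_i=\alpha(x_i)$.
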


\subsection{NP-Hardness for \twowriter}\label{sec:np}

Here, our objective is to prove hardness even when at most two threads write to each variable. We reduce 3-CNF-SAT to \vscp\ for \twowriter\ programs. In this reduction, a formula $\varphi$ with $k$ variables and $m$ clauses yields a program $\program_\varphi$ with $\mathcal{O}(k+m)$ threads. While this suffices to show $\NP$-hardness, it does not yield the stronger ETH-based lower bound, which requires $\mathcal{O}(k)$ threads (and that does not depend on the number of clauses $m$).

The key challenge here is to reduce the number of threads writing to each clause variable $c_j$ from three (one for each literal in the clause) to at most two, while still encoding the satisfiability of the 3-CNF formula.
For each clause $C_j = (\ell_j^1 \vee \ell_j^2 \vee \ell_j^3)$, we introduce an additional auxiliary variable $d_j$, and modify the construction as follows.

\begin{itemize}[left=0.5em]
\item In the thread corresponding to the third literal $\ell_j^3$, replace the write $\wt(c_j,1)$ with $\wt(d_j,1)$.
\item Introduce a new thread with two events as follows:
\[
S_j = (j : \rd(c_j,1)) \cdot (j : \wt(d_j,1)).
\]
\item In the thread $S_f$, replace each $\rd(c_j,1)$ with $\rd(d_j,1)$.
\end{itemize}

This results in the same outcome. Either clause $C_j$ is set to true due to the third literal (and hence $\wt(d_j, 1)$ happens), or if it needs to be true due to the first or second literal, then $S_j$ comes into effect and performs $\wt(d_j, 1)$.
Let $\program_\varphi$ denote the resulting program. Note that due to the additional threads $S_j$, the total number of threads now also depends on $m$, the number of clauses.

\begin{restatable}{lemma}{twoWriterLemma}\label{lem:np-two-writer}
$\varphi$ is satisfiable iff there exists a sequentially consistent partial interleaving $\sigma$ of $\program_\varphi$ with $0$ preemptions.
\end{restatable}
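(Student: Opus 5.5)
The plan mirrors the argument for Lemma~\ref{lem:eth-sc}. The key observation is that in an interleaving with $0$ preemptions, no thread can be interrupted before its last event. So the interleaving is a concatenation of whole threads, and it is determined by an order on the threads. I will prove the two directions separately, reasoning only about thread orders. I read the statement as concerning a complete interleaving, as in the \threewriter\ case; for a partial interleaving the backward direction below still goes through, provided the helpers $H_i^0$ and $H_i^1$ and the thread $S_f$ are required to occur in it.

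\emph{($\Rightarrow$)} Fix a satisfying assignment $\alpha$. I build the thread order in six stages.
\begin{enumerate}
\item Schedule $S_i^{\alpha(x_i)}$ for every $i$.
\item Schedule every literal thread $S_{x_i}$ or $S_{\neg x_i}$ whose literal is true under $\alpha$. Each such thread reads $v_i$, whose only value so far is $\alpha(x_i)$, so its read is satisfied. It then writes $\wt(c_j,1)$ or $\wt(d_j,1)$ for the clauses it belongs to.
\item Schedule $S_j$ for every clause $C_j$ in which $\ell_j^1$ or $\ell_j^2$ is true. Such an $S_j$ reads $c_j=1$ and writes $d_j=1$.
\item Now every $d_j$ holds $1$: either $\ell_j^3$ was true and its thread wrote $d_j$ in stage~2, or $S_j$ ran in stage~3. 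Only the value $1$ is ever written to $c_j$ and $d_j$, so no later write can conflict. Hence $S_f$ can run and write $\wt(x,1)$.
\item For each $i$, run $H_i^{\alpha(x_i)}$, then $S_i^{1-\alpha(x_i)}$, then $H_i^{1-\alpha(x_i)}$. Each helper reads $x=1$ and the current value of $v_i$, so both reads are satisfied.
\item The last write to each $v_i$ is now $1-\alpha(x_i)$. So the remaining literal threads, all of whose literals are false, can run. They write $c_j=1$ for every clause in which $\ell_j^1$ or $\ell_j^2$ is false. After this, $c_j=1$ for every $j$, since $c_j$ is written by the threads of both $\ell_j^1$ and $\ell_j^2$, all of which have now run. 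So the remaining $S_j$ can run.
\end{enumerate}
The result is an SC interleaving with no preemptions.

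\emph{($\Leftarrow$)} Take an SC interleaving with $0$ preemptions, viewed as an order on threads. Both $H_i^0$ and $H_i^1$ read $\rd(x,1)$, and $x$ is written only by $S_f$ at its end, so both helpers come after $S_f$.

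The central claim is that at most one of $S_i^0$ and $S_i^1$ precedes $S_f$. Suppose both do. Then the value of $v_i$ immediately after $S_f$ is fixed to some $b$. Either the helper $H_i^{1-b}$ runs before the remaining writes to $v_i$, in which case it reads the wrong value. Or there are no remaining writes to $v_i$ at all, since each of $S_i^0$ and $S_i^1$ writes $v_i$ exactly once, and both have already run. In either case $H_i^{1-b}$ fails, a contradiction.

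Now define $\alpha(x_i)$ to be the value written by whichever of $S_i^0,S_i^1$ precedes $S_f$, choosing arbitrarily if neither does. Any literal thread that runs before $S_f$ reads $v_i$ there, so its read is matched by that unique earlier write. Hence its literal is true under $\alpha$.

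For each clause $C_j$, the read $\rd(d_j,1)$ in $S_f$ needs an earlier write to $d_j$. There are two cases.
\begin{itemize}
\item The write comes from the thread of $\ell_j^3$, which then ran before $S_f$, so $\ell_j^3$ is true.
\item The write comes from $S_j$. Then $S_j$'s read $\rd(c_j,1)$ must be matched by an earlier write from the thread of $\ell_j^1$ or $\ell_j^2$, which also ran before $S_f$. So that literal is true.
\end{itemize}
In both cases $C_j$ is satisfied by $\alpha$.

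The main obstacle is the central claim of the backward direction, that at most one of $S_i^0,S_i^1$ precedes $S_f$. It relies on each of $S_i^0$ and $S_i^1$ being a single write, so that no later event of either thread can restore the overwritten value of $v_i$. Everything else is bookkeeping: checking that the clause gadget $S_j$ can always be scheduled eventually in the forward direction, which is where every clause having two writers to $c_j$ matters.
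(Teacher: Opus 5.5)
Your proof is correct and follows the same route as the paper. In the forward direction you build a $0$-preemption schedule from a satisfying assignment. In the backward direction you use the helpers' $\rd(x,1)$ to force at most one of $S_i^0,S_i^1$ before $S_f$, read off the assignment, and case-split on whether $\rd(d_j,1)$ in $S_f$ is served by the $\ell_j^3$-thread or by $S_j$. Your forward schedule is more explicit than the paper's ``add all remaining threads''; in particular, you run each leftover $S_j$ only after all false-literal threads have written $c_j$. Your reading of the statement as concerning complete interleavings matches how the paper's proof uses completeness.
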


Thus, we have shown that the preemption-bounded consistency problem is $\NP$-hard for \twowriter~programs.

\subsection{W[1]-hardness for \vscp\ when $\pp$ is a parameter and not fixed}
\label{sec:w1-hard}

In this section, we will show that the  VSC problem is $W[1]$-hard when the number of preemptions $\pp$ is an input parameter.

\wonehardness*

The proof follows from a parameterized reduction from $k$-independent set problem. Let $G = (V,E)$ be a connected graph with vertex set $V$ and edge set $E$. The $k$-independent set problem asks if there exists a set of vertices $V_k \subseteq V$ of size $k$ such that no two vertices share a common edge. 

We will reduce an instance $(G, k)$ of the independent set problem to an instance $(\program_G, \pp)$ with $\pp = 3k$, of the \vscp\ problem. The program $\program_G$ is depicted in Table~\ref{tab:independent-set}. It has an $\Init$ thread, $k$ checker threads $\Checker_1, \Checker_2, \dots, \Checker_k$, and $k$ selector threads $\Sel_1, \Sel_2, \dots, \Sel_k$. There is a variable $y_e$ corresponding to each edge $e \in E$, variables $x_1, x_2, \dots, x_k$ to help pick $k$ vertices, and some book-keeping variables $s, p_0, \dots, p_k$. 

\begin{table}
\centering
\caption{Reduction from $k$-independent set problem. Index $j$ ranges from $1$ to $k$.}
\label{tab:independent-set}
\medskip
\centering
\begin{tabular}{l  @{\qquad} l  @{\qquad} l }
    $\Init$ &  $\Checker_j$ & $~\Sel_j$ \\
    &  & \\
$\wt(y_e, 0)$   &  $\rd(p_{j-1}, 1)$ &   $\forall v \in V:$  \\
$\forall e \in E$  &  $\wt(s, 0)$ when $j \neq k$ &  \\
& & $\rd(y_e, 0);~\wt(y_e, j)$ \quad $\forall e \in E$ containing $v$ \\
$\wt(x_j, 1)$ &  $\rd(x_j, 0)$  &  \\
$\forall 1 \le j \le k$ & $\wt(s, 1)$ when $j = k$  & $\rd(s, 1)$   \\
  & &  $\wt(x_j, 0)$   \\
$\wt(s, 1)$ & $\wt(p_j, 1)$ &  $\wt(x_j, 1)$  \\
$\wt(p_0, 1)$ & &   \\
 & &   $\rd(y_e, j);~\wt(y_e, 0)$ \quad $\forall e \in E$ containing $v$  
\end{tabular}
\end{table}

The $\Init$ thread is self-explanatory from Table~\ref{tab:independent-set}. For the first $k-1$ checker threads $\Checker_j$ (with $1 \le j \le k-1$) the second operation is $\wt(s, 0)$ (depicted as $\wt(s,0)$ when $j \neq k$). These threads do not have a $\wt(s,1)$ instruction at all. For the last checker thread $\Checker_k$, there is no $\wt(s,0)$, and instead there is a $\wt(s, 1)$ right after $\rd(x_j, 0)$. 

We now explain the selector threads. Each selector thread $\Sel_j$  
has $|V|$ blocks of operations, one block for each $v \in V$. Taking $V = \{u_1, \dots, u_n\}$, the selector thread $\Sel_j$ can be seen as:
\begin{align*}
 \Sel_j: \quad \BB^j_{u_1}~ \BB^j_{u_2}~\cdots~ \BB^j_{u_n}
\end{align*} 
where each $B^j_u$ is a block of operations corresponding to a vertex $u$. This can further be seen as sub-blocks:
\begin{align*}
 \BB^j_u:~~&~  \Start^j_u ~~\rd(s, 1) ~~ \wt(x_j, 0) ~~ \wt(x_j, 1) ~~ \Finish^j_u \\
 \text{ where~} \qquad \qquad \Start^j_u: ~~& ~\rd(y_e, 0) ~ \wt(y_e, j) ~~ \forall e  \text{ incident on } u \\
\Finish^j_u: ~~& ~\rd(y_e, j) ~ \wt(y_e, 0) ~~ \forall e \text{ incident on } u
\end{align*}

\paragraph*{Overview of the W[1]-hardness reduction.}

Consider an SC interleaving of $\program_G$. It begins with the execution of the $\Init$ thread. The checker and selector threads appear later. 
Notice two conflicting events: the $\Init$ thread has $\wt(x_j, 1)$ whereas $\Checker_j$ has a read $\rd(x_j, 0)$. Therefore, a $\wt(x_j, 0)$ has to appear somewhere in between. This can be provided by $\Sel_j$: for $\wt(x_j, 0)$ to be the last write between the conflicting events in $\Init$ and $\Checker_j$, a part of $\Sel_j$ ending at a $\wt(x_j,0)$ operation should appear in between. So, we say that $\Sel_j$ is ``inside'' some block corresponding to a vertex -- call this vertex $v_j$. Intuitively, we require every $\Sel_j$ to have executed upto a $\wt(x_j, 0)$, then wait for $\Checker_j$ to finish, and subequently execute the rest of $\Sel_j$ that starts with $\wt(x_j, 1)$. However, we cannot do this na\"ively. Suppose $\Sel_1$ executes upto $\wt(x_1,0)$ of $\BB^1_{v_1}$. Then at the start of this block, we have writes $\wt(y_e, 1)$ for all $e$ incident on $v_1$. This may disable the execution of $\Sel_2$ -- notice that the beginning of each block contains a read of the form $\rd(y_e, 0)$. Therefore we need to carefully reach to $\wt(x_j, 0)$ in each thread. We do this in phases. 

\begin{figure}[h]
\centering
\scalebox{0.85}{\begin{tikzpicture}[box/.style={rectangle, fill=gray!20, inner sep=2pt}]
\begin{scope}[every node/.style={box}]
  
  \node (b) at (4,0) {\footnotesize $\Start^j_{v_j}~$ $~\rd(s, 1)~$ $~\wt(x_j,0)~$ $~\wt(x_j, 1)~$ $~\Finish^j_{v_j}$ };
\end{scope}
\node at (4, -1.4) {\footnotesize Block of vertex $v_j$ in selector $\Sel_j$};
\draw [dashed] (0.5, 0.8) -- (0.5, -0.8);
\draw [dashed] (4.5, 0.8) -- (4.5, -0.8);
\draw [dashed] (7.5, 0.8) -- (7.5, -0.8);

\node at (0.5, 1.2) {\scriptsize  Preemption $1$};
\node at (4.5, 1.2) {\scriptsize  Preemption $2$};
\node at (7.5, 1.2) {\scriptsize  Preemption $3$};
\end{tikzpicture}}
\caption{Illustrating the preemption points within a selector thread}
\label{fig:w1-illustration}
\end{figure}

First we make all the selector threads execute upto the beginning of some block (see preemption $1$ in Figure~\ref{fig:w1-illustration}), one after the other. Since the end of each block ``resets'' $y_e$ to $0$ (through the $\wt(y_e, 0)$ instruction), such an execution is possible. As a second step, we make every selector execute upto $\wt(x_j, 0)$, that is, between preemption points $1$ and $2$ in Figure~\ref{fig:w1-illustration}. The crucial point is that this phase of executions is possible only when the selectors have picked an independent set: on executing $\Sel_j$ upto preemption point $2$, all $y_e$ for edges $e$ incident on $v_j$ have been set to $j$, but since none of these edges is incident on a $v_i$ ($i \neq j$), the selector thread $\Sel_i$ can also pass through its $\rd(y_{e'}, 0)$ operations and execute upto $\wt(x_i, 0)$. 

After all selectors are executed upto preemption point $2$, the checker threads can be executed. Finally, we want to execute the rest of the selector threads after the checker threads are completed. However, executing the full selector may not be possible since some of the $y_e$ variables have current values different from $0$. So to once again reset all $y_e$ to $0$, we execute each selector upto preemption point $3$ shown in Figure~\ref{fig:w1-illustration}. Then the rest of the operations are executed in each selector thread, one by one, until each one completes. 
This shows that when the selectors choose an independent set, an SC interleaving with at most $3k$ preemptions is possible, with $3$ preemptions occuring in each selector thread. 

For the converse, we can show that if there is an SC interleaving, then the preemptions within the selector threads encode an independent set. We start with the fact that in the SC interleaving, a $\wt(x_j, 0)$ should appear before the $\rd(x_j, 0)$ of $\Checker_j$. Hence $\Sel_j$ should have a preemption inside a vertex-block at $\wt(x_j, 0)$, after which $\Checker_j$ happens, and then the rest of $\Sel_j$.   Here is an important detail to take care of: in order to enforce that the selected threads form an independent set, we need a point during the execution where all the selector threads have simultaneously ``inside'' a block. One possible way this may not happen is when $\Sel_j$ executes the operations between preemption points $2$ and $3$ right after $\Checker_j$ finishes, and then $\Sel_{j+1}$ enters inside a block. Then $\Sel_j$ and $\Sel_{j+1}$ are not simultaneously ``inside''.  To avoid this, we add a $\wt(s, 0)$ in $\Checker_j$ and a $\rd(s, 1)$ in $\Sel_{j+1}$ that will prevent $\Sel_{j+1}$ entering inside, after $\Checker_j$. Formal proofs for this reduction can be found in Appendix~\ref{app:w1-hard}.

\section{Conclusion}

In this work, we have presented a study of the fundamental VSC-problem in the context of bounded preemptions. This is a deviation from the usual parameters considered in the study of this problem. Our results are two-fold: an algorithm for the single-writer case, and hardness reductions in the presence of multiple writers, and unbounded preemptions. 
On the algorithmic front, the key insight is that for \onewriter\ systems the idea of a conflict graph helps to determine ordering without having to fall back on a systematic enumeration of all permutations. Such a trick is not possible in the case of multiple writers. 

Investigating a preemption-bounded version of stateless model-checking under the reads-value-from equivalence and the impact of incorporating the \vscp\ problem into the procedure, remain directions for future work. Another interesting direction is to investigate the consistency problem for the Total Store Ordering (TSO) memory model under the influence of various context-switches~\cite{TSO-context-bounding} defined in the literature. 

\bibliographystyle{splncs04}
\bibliography{m}

\appendix

\newpage 

\section{Appendix for Section~\ref{sec:one-writer}}
\label{app:ptime}

\subsection{Step 2.1. Partition the program $\program$ into Blocks : Block Programs}

A subset of events $\points \subseteq \eventset$ represents \emph{preemption points} if no event in $\points$ is the last event of its thread. Assume we have chosen a specific set $\points$ of preemption points. Note that when $\pi=0$, $\points$ is the emptyset.

Let $c^\points_i \ge 0$ be the number of events of thread $i$ in $\points$. 
We will omit the superscript and simply write $c_i$ when $\points$ is clear from the context. Furthermore,  $\sum_{i} c_i \le \pp$. Each thread $S_i$ can then be divided into blocks, based on the events of $S_i$ that are in $\points$, as explained next.

Consider thread $S_i$ as 

 $S_i = a_1 a_2 \dots a_m$.

Assume we guess events $a_{j_1}, a_{j_2}, a_{j_3}$ with $1 \le j_1 < j_2 < j_3 < m$ to be the events after which preemptions occur (in general, we guess $c_i$ events in thread $S_i$; we have taken $c_i = 3$ just as an example). This divides the thread into four blocks:
\begin{align*}
b^i_1: a_1 \dots a_{j_1} \qquad b^i_2: a_{j_1 + 1} \cdots a_{j_2} \qquad b^i_3: a_{j_2 + 1} \cdots a_{j_3} \qquad b^i_4: a_{j_3 + 1} \cdots a_m
\end{align*}
Each block is a contiguous set of events of a thread, and in every interleaving where the preemption points for $S$ occur at $a_{j_1}, a_{j_2}, a_{j_3}$, the events within a block appear contiguously, and moreover no two consecutive blocks of a thread appear next to each other. 

We denote the thread $S_i$ partitioned into blocks as $S_i^\points$. 
\begin{align*}
    \thread^\points_i = b^i_1 \cdot b^i_2 \cdots b^i_{c_i + 1}
\end{align*}
where each block $b^i_j$ is a contiguous sequence of events of $\thread_i$ ending in a preeemption point, as explained above. We call $\thread^\points_i$ as a \emph{block-thread}, and the set of block-threads $\thread^\points_1, \thread^\points_2, \dots, \thread^\points_k$ a \emph{block-program}, which we denote by $\program^\points$.

\begin{example}\label{eg:block-program}
In Figure~\ref{fig-onewriter-example} if $\points_1=\{(1:w(x,1)), (2:r(x,2))\}$ be the preemption points, then the block-program w.r.t. $\points_1$ would look as follows: 
\begin{alignat*}{2}
    \thread^{\points_1}_1 &=b^1_1\cdot b_2^1 &&=(1:w(x,1)) ~\cdot~ (1:w(x,2))(1:r(y,1))\\
     \thread^{\points_1}_2 
     &= b^2_1\cdot b_2^2 &&=(2:r(x,2)) ~\cdot~ (2:w(y,1))\\
      \thread^{\points_1}_3 &=b^3_1 &&=(3:r(x,1))
\end{alignat*}
where the first thread $\thread^{\points_1}_1$ is partitioned as  $b_1^1=(1:w(x,1))$, $b_2^1=(1:w(x,2))(1:r(y,1))$, and similarly for the other threads. 

Likewise, for the set $\points_2=\{(1:w(x,1)), (1:w(x,2))\}$ of preemption points, the block-program w.r.t. $\points_2$ would look as follows: 
\begin{alignat*}{2}\label{blockpartition}
    \thread^{\points_2}_1 &= b^1_1\cdot b_2^1\cdot b_3^1 &&=(1:w(x,1)) ~\cdot~(1:w(x,2))~\cdot~ (1:r(y,1))\\
     \thread^{\points_2}_2 
     &=b_1^2 &&=(2:r(x,2))(2:(w(y,1)))\\
      \thread^{\points_2}_3 &=b_1^3 &&= (3:r(x,1))
\end{alignat*}
Finally, for the set $\points_3=\emptyset$, $\thread^{\points_3}_1{=} b^1_1 {=} \thread_1$,  
$\thread^{\points_3}_2{=} b^2_1 {=} \thread_2$ and $\thread^{\points_3}_3{=} b^3_1 {=} \thread_3$ consist of a single block each.
\end{example}

\subsection*{Sequential Consistency for Block Programs} We extrapolate the notion of sequential consistency in a natural way to the obtained block-program. A partial interleaving $\sigma^\points$ of $\program^\points$ is sequentially consistent if the partial interleaving $\sigma$ of the original program $\program$ obtained by expanding each block of $\sigma^\points$ into the contiguous sequence of events that it represents, is sequentially consistent. 

\begin{example}\label{eg:sc-for-block-programs}
Consider the program in Figure~\ref{fig-onewriter-example}. Let us call it $\program$. Then if we choose the set of preemption points $\points_1$ as in Example~\ref{eg:block-program}, $\sigma^{\points_1}=b^1_1\cdot b^3_1 \cdot b_1^2\cdot b_2^1\cdot b_2^2 $ is an interleaving for the block-program $\program^{\points_1}$. Note that this is not sequentially consistent as the last write on $x$ before $\rd(x,2)$ in $\sigma^{\points_1}$ is $\wt(x,1)$, which conflicts with $\rd(x,2)$. 
 Hence $\rd(x,2)$ has no write to read from.  
An example of a sequentially consistent interleaving of blocks from $\program^{\points_2}$ is: $\sigma^{\points_2}=b_1^1\cdot b_1^3\cdot b_2^1\cdot b_1^2\cdot b_3^1$.
\end{example}

Thus,  there is a direct correspondence between interleavings of $\program$ and interleavings of a block-program $\program^\points$. When consecutive blocks within a thread do not appear together in an interleaving, then, it represents an interleaving of $\program$ where the preemptions occur exactly at $\points$.  We state this observation in the following lemma, whose proof follows directly from the definition of block-programs.

\begin{lemma}
Let $\points$ be a set of preemption points of $\program$. Then, $\program$ has a sequentially consistent interleaving where the preemptions occur at $\points$, iff the block-program $\program^{\points}$ has a sequentially consistent interleaving where no two consecutive blocks come from the same thread. 
\end{lemma}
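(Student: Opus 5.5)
We prove both directions through the natural correspondence between interleavings of $\program$ and interleavings of $\program^\points$, where a block is expanded into its contiguous events or, in the other direction, contiguous runs of events are contracted into blocks. Throughout we use two definitional facts. First, preemptions are counted only at positions $j$ where $a_j$ is not the last event of its thread and $a_{j+1}$ comes from a different thread. Second, by definition, sequential consistency of a block interleaving $\sigma^\points$ means sequential consistency of its expansion $\sigma$.

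\emph{($\Leftarrow$)} Let $\sigma^\points$ be an SC interleaving of $\program^\points$ in which no two consecutive blocks come from the same thread. Expand it to $\sigma$.
\begin{itemize}
\item $\sigma$ is an interleaving of $\program$: blocks of a thread appear in their block-thread order, and events inside a block keep program order.
\item $\sigma$ is SC by the definition of SC for block programs.
\item Within a block, consecutive events come from the same thread, so no preemption occurs there.
\item At a boundary between blocks $b$ and $b'$, the last event of $b$ is either a point of $\points$ (if $b$ is inner) or the last event of its thread (if $b$ is outer). Since $b'$ belongs to a different thread, a preemption occurs exactly when $b$ is inner.
\item Every inner block is followed by some block, because the outer block of the same thread must come after it.
\end{itemize}
Hence the preemption positions of $\sigma$ are exactly the events of $\points$.

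\emph{($\Rightarrow$)} Let $\sigma$ be an SC interleaving of $\program$ whose preemptions occur exactly at $\points$. Cut $\sigma$ into maximal runs of events from a single thread. By maximality, each run ends with an event $a_j$ followed by an event of another thread, or at the end of $\sigma$.
\begin{itemize}
\item Such an $a_j$ is either a preemption point, hence in $\points$, or the last event of its thread.
\item Conversely, every event of $\points$ is a preemption position, so it ends a run. No event of $\points$ lies strictly inside a run.
\end{itemize}
So for thread $i$, the runs of $S_i$ are separated exactly at the events of $\points\cap S_i$ and at the thread end. They therefore coincide with the blocks $b^i_1,\dots,b^i_{c_i+1}$. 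Contracting each run to its block gives an interleaving $\sigma^\points$ of $\program^\points$ whose expansion is $\sigma$, so $\sigma^\points$ is SC. Consecutive blocks of $\sigma^\points$ are consecutive maximal runs, so they come from different threads.

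The only delicate step is the run-equals-block identification in ($\Rightarrow$). It needs both directions: every event of $\points$ ends a run, and every run ends at an event of $\points$ or at a thread end. Both follow directly from the definitions.
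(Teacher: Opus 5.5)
Your proof is correct and takes the same approach as the paper. The paper gives no separate proof, saying only that the lemma follows directly from the definition of block-programs, and your expansion/contraction correspondence between blocks and maximal single-thread runs of $\sigma$ spells out exactly that remark. You leave one point implicit: a run that ends at the end of $\sigma$ must end at the last event of its thread. This is immediate, because $\sigma$ is a complete interleaving that respects program order.
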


The question therefore boils down to checking whether there is a sequentially consistent interleaving of $\program^{\points}$.

\noindent{\bf{Finding a sequentially consistent interleaving of a block-program}}.  
Recall that each thread has been divided into $c_i + 1$ blocks. A naive solution would be to enumerate all possible orderings of the $\sum_i (c_i + 1)$ blocks satisfying the program order and verify sequential consistency of each of them. However:
\begin{equation} 
  \sum_i (c_i + 1) = \sum_i c_i + \sum_i 1 \le \pp + k  \label{eq:naive} 
\end{equation} 
This only gives a complexity bounded by $(\pp + k)!$ which is not polynomial-time  when $\pp$ is the only fixed parameter.
Therefore, we propose a different procedure. 

\paragraph*{Dividing the block program into inner blocks and outer blocks.} Suppose $\program^{\points}$ is a block-program defined on the set of preemption points $\points$. We call the last block of any thread in $\program^{\points}$ as outer block and all other non-last blocks as inner blocks. In Figure~\ref{fig:example-algo}, all the violet blocks are inner blocks while the green blocks are outer blocks. As each inner block ends with a preemption point in $\points$, there are total $\pp$ inner blocks. Then the total number of outer blocks is $k$. We then  do a pre-processing that would add some ordering among the outer blocks of $\program^{\points}$ and discard the chosen set of preemptions $\points$ immediately if a cycle is found.

\subsection{Step 2.2. Conflict graph analysis on outer blocks}

By the end of Step 2.2, we have now identified the inner blocks and the outer blocks of $\program^{\points}$. 

We say the block $b_w$ conflicts block $b_r$ if (1) they belong to different threads, (2) there is a variable $x$ such that last write on $x$ in $b_w$ is in conflict with some read on $x$ in $b_r$. Then we order them as $b_r \to b_w$. Now look at the set of all outer blocks of $\program^{\points}$. For each outer block $b_w$ find if it is in conflict with some other outer block $b_r$. If yes, put the the order of conflict $b_r \to b_w$. This says that in any interleaving, $b_r$ appears before $b_w$. As $b_w$ is the last block of its thread, if $b_w$ writes to $x$ and is in conflict with $b_r$, $b_w$ cannot be executed before $b_r$. The blue edges in Figure~\ref{fig:example-algo} represents conflict order among the outer blocks. 

We can construct a conflict graph $G^{\program
^{\points}}$ as follows: For each outer block there is a node in $G$. For each outer block $b_w$ if it conflicts some other outer block $b_r$, we put the edge $b_r \to b_w$.

If there is a cycle in the conflict graph we declare that there exists no SC interleaving for the chosen set of preemptions $\points$. Thus we proceed to the next step only if the conflict graph is acyclic. 

\paragraph*{Complexity}: For each of the $k$ outer blocks we check if it conflicts some other outer block. To do this, given an outer block $b$, for every variable $x$ that the block writes, we check if the last write on $x$ in $b$ is in conflict with any read of other outer blocks. Thus constructing the graph takes $\mathcal{O}(n\cdot k)$ time. Next we check for a cycle. This takes not more than $\mathcal{O}(k^2)$ time. As $k< n$, Step 2.3 would take $\mathcal{O}(n\cdot k)$ time to run. 

\subsection{Step 2.4. Placing the outer blocks in a permutation of inner blocks}

After the conflict analysis returns no cycle for the conflict graph $G^{\program
^{\points}}$, in Step 2.3 we choose a permutation $p$ of the inner blocks. On each of these $\pp!$ permutations we now call Algorithm~\ref{pseudocode:vscp-interleaving}. This is the final step which returns an SC interleaving of the blocks of $\program^{\points}$. For this section $\program$ refers to $\program^{\points}$ and $G$ refers to $G{\program
^{\points}}$. We omit the superscript of the notations.

\begin{algorithm}[h]
\caption{Module to obtain an $\scmm$-consistent block interleaving from a given block program $\program$}\label{pseudocode:vscp-interleaving}

\Input{A block-program $\program$ of a $\onewriter$ program, a permutation $\sigma=\{b_1,b_2,\cdots,b_{\pp}\}$  on inner blocks and an acyclic conflict graph $G$ on the outer blocks}
\Output{an $\scmm$ consistent interleaving of blocks}
\Fn{CheckSC($\program$, $\sigma$, $G$)}{

    $G'=$ a linearization of $G$;
    
    $test=0$;
    
    $b'=\epsilon$;
    
    Add dummy block $b_{\pp+1}$ to $\sigma$;
    
    \For{$i\gets0$ \KwTo $\pp+1$}{\label{for-loop-outer}
        
        \While{there exists an outer block which is enabled}{\label{While-loop}
        
        Choose the earliest enabled block $b$ in $G'$;
        
        \For{$j\gets i+1$ \KwTo $\pp$}{\label{For-loop-iner}
                \If{$b$ conflicts $b_j$} {\label{if-conflict}
                
                test$=1$; 
                
                \Break;}
            
            }
            
            \If{$test=0$}{
            
            Place $b$ between $b'$ and $b_{i+1}$ in $\sigma$;
            
            $b'=b$;
            
            Remove $b$ from $G'$;}
        
        }
        
        $b'=b_{i+1}$;

    }
    
    \If{$\sigma$ is $\scmm$-consistent and $G'$ is empty}{\Return 1;}
}

\end{algorithm}

The pseudocode describes Step 2.4 of the polynomial time algorithm for \vscp. The input is a block-program of a single writer system  divided into inner and outer blocks, a permutation of the inner blocks $\sigma$ and a conflict graph $G$ on the outer blocks. Recall that outer blocks are the last blocks of each block sequence. It takes a linearization $G'$ of $G$ and uses it to place the outer blocks in their correct position in $\sigma$ so that the resulting interleaving is $\scmm$-consistent. 

Description: The algorithm~\ref{pseudocode:vscp-interleaving} chooses a linearization $G'$ of $G$. For each position between the consecutive inner block in $\sigma$, beginning from the left to $b_1$, the algorithm places all enabled outer blocks such that the conflict order of $G$ is satisfied. Note that any outer  block $b$ is enabled at some position only if (1) all blocks above it in its thread are already in sigma before the currently considered position (2) if the last block before the current position is $b'$ and the prefix of $\sigma$ upto $b'$ is $\sigma'$, then $\sigma'\cdot b$ should be $\scmm$-consistent. The variable $b'$ captures the last position in $\sigma$ where a block has been inserted. Moreover, $b$ should not conflict any inner block of $\sigma$ at any later position. This is captured by the condition of the \emph{If} statement in line~\ref{if-conflict}. At the end of the outer for-loop, if no more outer blocks remain to be inserted in $\sigma$, and the resulting interleaving $\sigma$ is $\scmm$-consistent, return 1. 

Complexity: Finding a linearization of $G$ requires $O(k^2)$ time. Note that the for-loop in line~\ref{for-loop-outer} runs $\mathcal{O}(\pp)$ times while testing the while loop condition takes $\mathcal{O}(n)$ time units. The inner for-loop takes no more than $\mathcal{O}(n)$ time to run. Any other statements inside the while-loop does not take more than $\mathcal{O}(n)$ time. The while-loop can atmost run $k$ times as there are atmost $k$ outer  threads. Thus the total complexity of the procedure is  $\mathcal{O}(\pp nk+k^2)=\mathcal{O}(\pp nk)$ as $k<n$.

\begin{theorem}
 A block program $\program$ has an $\scmm$-consistent interleaving of all its blocks if and only if the  algorithm~\ref{pseudocode:vscp-interleaving} returns 1 for some permutation of the inner blocks of $\program$.
\end{theorem}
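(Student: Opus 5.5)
The soundness direction is immediate, so the work is in completeness, which I would prove by an exchange argument.

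\emph{Soundness.} If the algorithm returns $1$ for some permutation, then the final check has verified that the constructed sequence is $\scmm$-consistent. It also checked that $G'$ is empty, so every outer block was placed. Outer blocks are placed only when enabled, so all earlier blocks of their thread already precede them. The inner blocks appear in the order of the permutation, and we may take that permutation to respect program order; otherwise the final check fails. Hence the output is an $\scmm$-consistent interleaving of all blocks.

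\emph{Completeness.} Fix an $\scmm$-consistent interleaving $\tau$ of all blocks, and let $p$ be the order in which the inner blocks occur in $\tau$. I would show that on input $p$ the algorithm places every outer block and that the resulting sequence $\sigma$ is $\scmm$-consistent. The argument rests on three structural facts about $\tau$.

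\begin{enumerate}
\item Since $\tau$ is consistent and each variable has a unique writer thread, $\tau$ respects every edge $b_r\to b_w$ of $G$. If $b_w$ preceded $b_r$, then after $b_w$ no further write to $x$ occurs, so the read in $b_r$ would see $b_w$'s last write, which conflicts with it.
\item For the same reason, no outer block in $\tau$ precedes an inner block that it conflicts with.
\item Each outer block in $\tau$ sits in some gap $g(b)\in\{0,\dots,\pp\}$ between consecutive inner blocks of $p$.
\end{enumerate}

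The key invariant is that each outer block $b$ is placed by the algorithm at a gap $\le g(b)$. I would prove this by induction following the order of $G'$ together with the gap order. The argument is to consider the first outer block $b$ that the algorithm fails to place by gap $g(b)$. At that moment, all its thread-predecessors, including earlier outer blocks it depends on for reads, have already been placed at gaps $\le g(b)$. Facts 1 and 2 then show that $b$ passes the conflict test against later inner blocks.

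The delicate part is the readability clause of ``enabled'', namely that $\sigma'\cdot b$ is consistent. I would handle it with a canonical-form claim: moving an outer block \emph{earlier}, to a position where all of its reads are satisfied, never breaks the consistency of other reads. This holds because any read that was previously satisfied and lies later has its source write either in an inner block, whose order is unchanged, or in an outer block. An outer block is final for its writer, and by the conflict test and the $G'$ order it cannot overwrite a value needed later. One must also handle the choice of a particular linearization $G'$ of $G$. This is resolved by the fact that any two outer blocks that are unordered in $G$ do not conflict, so they commute.

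Finally, once all outer blocks are placed at gaps $\le g(b)$ in a $G$-respecting order, I would compare $\sigma$ with $\tau$ read by read. For a read in an inner block, the last preceding write to its variable in $\sigma$ has the same value as in $\tau$: any outer write that was moved earlier does not conflict with later inner blocks, and inner writes are in identical order. For a read in an outer block, the same holds by the enabledness condition at its placement. Hence $\sigma$ passes the final check.

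The main obstacle is the exchange claim that moving outer blocks earlier preserves consistency while the greedy order follows an arbitrary linearization of $G$. I would attack it first by proving a lemma: if $\tau$ is consistent, then swapping an adjacent pair (block, outer block $b$), where $b$ is enabled before the block and does not $G$-precede it, yields a consistent interleaving. Completeness then follows by repeated swaps that transform $\tau$ into the algorithm's output.
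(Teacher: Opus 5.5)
Your main line is sound, and it is organized differently from the paper's proof, although both rest on the same insight: placing outer blocks as early as possible never hurts.

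\textbf{Comparison with the paper.} The paper fixes an extremal witness. Among consistent interleavings with inner order $p$ it takes a ``nice'' one, in which every inner block has as many blocks before it as possible, and asserts without proof that such a minimum exists. It then argues, by induction on the longest prefix shared with a run of the algorithm, that the algorithm reproduces it, using that the algorithm is ``free to choose'' the next outer block. You instead compare the run with an \emph{arbitrary} witness $\tau$ through a stays-ahead invariant (every outer block is placed at a gap $\le g(b)$), and then check $\sigma$ read by read. Your route avoids both weak points of the paper's. You never need the existence of a minimal interleaving, which would itself require an exchange argument. You also never need the algorithm to follow a prescribed block, because the invariant holds for any order in which currently placeable blocks are chosen; the particular linearization $G'$ is therefore irrelevant. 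The paper's route, once completed, yields a canonical earliest-placement solution, but justifying it takes more work.

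\textbf{Claims that need repair.}

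(i) Induct on positions in $\tau$, not on $G'$ within a gap. Let $c$ be the outer block of $y$'s writer ending in $\wt(y,5)$, and $b$ an outer block containing $\rd(y,5)$. There is no edge between them and $G'$ may list $b$ first, yet $b$ needs $c$. Also fold the inner-block comparison into the same induction, since enabledness of $b$ requires the whole prefix $\sigma'$, including $b_1,\dots,b_{g(b)}$, to be consistent. Readability of $b$ then needs no canonical-form detour. For each $x$ read by $b$, the blocks of $x$'s writer thread in $\sigma'$ are those preceding $b$ in $\tau$, plus possibly that thread's outer block $W$. $W$ cannot be present if it conflicts $b$, because of the edge $b\to W$.

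(ii) The side condition of your swap lemma points the wrong way. The block precedes $b$ in $\tau$ and $\tau$ respects $G$, so ``$b$ does not $G$-precede it'' is vacuous. What you need is that $b$ does not conflict the block. A counterexample to the lemma as stated: an adjacent pair consisting of a block containing $\rd(x,1)$, followed by the outer block of $x$'s writer ending in $\wt(x,2)$.

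(iii) $G$-unrelated outer blocks need not commute; the pair $c,b$ in (i) is a counterexample. What is true, and sufficient, is monotonicity: placing $c$ can disable a placeable $b$ only if $c$ conflicts $b$. The $G$-respecting rule forbids that while $b$ is unplaced.

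(iv) All of this presupposes that the algorithm places $b$ only once all its $G$-predecessors are placed, and tries another candidate when the conflict test fails. The literal pseudocode does neither: it chooses the earliest \emph{enabled} block of $G'$, and \texttt{test} is never reset. State your reading explicitly. Likewise, restrict the enumeration to permutations respecting program order, rather than relying on the final $\scmm$ check to reject the others.
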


\begin{proof}
$(\Leftarrow)$ This direction is direct. Whenever the algorithm returns 1 for some permutation of the inner blocks of $\program$, there is an $\scmm$-consistent execution of $\program$. This is because we insert any outer block at some position if and only if it does not conflict any inner block later to that position and it is enabled there. Also any conflict between the outer blocks is respected while the placing the outer blocks. This is supported by the fact that only the earliest enabled block of $G'$, a linearization of the conflict graph $G$, is chosen at each iteration of the while-loop in algorithm~\ref{pseudocode:vscp-interleaving}. Also the last check for $\scmm$-consistency strengthens our claim.

 $(\Rightarrow)$ Let $\program$ has an $\scmm$ consistent interleaving $\sigma$  for some permutation of its inner block $p$. We shall show that with $p$ as an input, the algorithm will return 1. Note that there must be no conflicting cycle among the outer blocks or else $\prog$ wouldn't have any $\scmm$-consistent execution. Let us define a few auxiliary notions. 
 
 Let $P=\{b_1,b_2,\cdots,b_{\pp}\}$ be a permutations of the $\pp$ inner blocks of $\program$. $S_P$ be the set of all $\scmm$-consistent interleavings of $\program$ with order of inner blocks $p$. 
 Let $\sigma_1,\sigma_2\in S_P$. We say $\sigma^b$ is a prefix of $\sigma$ upto block $b$. Then we say $\sigma_1<\sigma_2$ if $\sigma_1^{b}$ has equal or more number of blocks than $\sigma_2^{b}$ for all blocks $b$ in $p$. Then there exists an interleaving $\sigma'$ such that $\sigma'<\sigma$ for all $\sigma\in S_P$. We call such an interleaving a `nice' interleaving. We shall show that a nice interleaving can be generated by our algorithm. 

 Let $R$ be the set of all interleavings obtained from our algorithm on $p$. 
Let $b$ be the first block of $\sigma$. If $b$ is an inner block, the prefix of $\sigma$ upto $b$ matches all interleavings of $\sigma$. Else if $b$ is an outer block, $b$ must be enabled at the first iteration of the while-loop in algorithm~\ref{pseudocode:vscp-interleaving}. As $b$ is an outer block, it must not conflict any $inner$ block of $p$ or else it would have been not placed as the first block in $\sigma$. Note that this could be said only because we are dealing with $\onewriter$. For multiple writers there would be more outer blocks that can conflict some inner blocks on the same variable as $b$. Thus $b$'s place would not be determined. 

Let $\sigma^b$ be the longest prefix that matches some interleaving in $R$. Surely, Let us look at $\sigma^b.b'$. If $b'$ is an inner block, there must not be any enabled outer block that does not conflict with the inner blocks not included in $\sigma^b$. Otherwise, $\sigma^b$ would not be `nice'. There would be an interleaving $\sigma'$ in $R$ such that $\sigma'<\sigma$. But then there must be an interleaving in $R$ with prefix $\sigma^b.b'$. 

On the other hand if $b'$ is an outer block, (1) $\sigma^b.b'$ must be $\scmm$-consistent (2) $b'$ cannot conflict any block not in $\sigma^b$. Our algorithm then is free to choose $b'$ after $\sigma^b$. Hence $\sigma^b.b'$ must be the prefix of some interleaving in $R$. By induction, our claim follows.

 \end{proof}

%%%%%%%%%%%%%%%%%
%%%%%%%%%%%%%%%%%
%%%%%%%%%%%%%%%%%

\section{Appendix for Section~\ref{sec:hardness}}
\label{app:hardness}

\threeWriterLemma*

\begin{proof}
    ($\Rightarrow$)
    Suppose $A$ is a satisfying assignment for $\varphi$. We construct a partial interleaving $\sigma$ as follows. Note that since we want $\sigma$ to have $0$ preemptions, each thread appears as a contiguous block.

    First, for each variable $x_i$, add the events of the thread $S^{A(x_i)}$, in some arbitrary order over $i$. 
    Next, for each $i$, add the thread $S_{x_i}$ if $A(x_i)=1$, and include $S_{\neg x_i}$ if $A(x_i)=0$. 

    Since $A$ satisfies $\varphi$, for every clause $C_j$, at least one of the above threads writes $\wt(c_j,1)$ before any read of $c_j$ occurs (see Table~\ref{tab:reduction-threads-scmm}).
    Now add the thread $S_f$ to the interleaving. 
    By construction, for each read $\rd(c_j,1)$ in $S_f$, there is a preceding write $\wt(c_j,1)$ in $\sigma$ with no intervening conflicting write. Therefore, these reads respect the $\scmm$ semantics. The final write $\wt(x,1)$ in $S_f$ ensures that subsequent reads on $x$ read the value $1$.

    Next, for each $i$, add the helper thread $H_i^{A(x_i)}$. 
    Each such thread contains a read $\rd(x,1)$, which reads the value written by the write $\wt(x,1)$ in $S_f$, and a read $\rd(v_i, A(x_i))$, which reads the value written by  the earlier write in $S_i^{A(x_i)}$.
    Finally, add the remaining threads: for each $i$, include $S_i^{\neg A(x_i)}$, $H_i^{\neg A(x_i)}$, and the literal thread corresponding to the negation of $A(x_i)$. All reads in these threads reads the value written by earlier writes in $\sigma$.

    By construction, every read in $\sigma$ has the value written by the most recent preceding write to the same variable with the same value. Hence, $\sigma$ is sequentially consistent. Further, since each thread appears as a contiguous block, $\sigma$ has $0$ preemptions.

    ($\Leftarrow$) Suppose there exists a sequentially consistent partial interleaving $\sigma$ of $\program_\varphi$ with $0$ preemptions.
    Since $\sigma$ has $0$ preemptions, $\sigma$ is essentially just a concatenation of the threads in some order. 

    Each helper thread $H_i^0$ and $H_i^1$ contains a read $\rd(x,1)$, and hence must appear after the write $\wt(x,1)$ in $S_f$.

    Consider a variable $x_i$. If both $S_i^0$ and $S_i^1$ appear before $S_f$ in $\sigma$, then both $\wt(v_i,0)$ and $\wt(v_i,1)$ occur before $S_f$. Since no further writes to $v_i$ occur before the helper threads, only one of the reads $\rd(v_i,0)$ or $\rd(v_i,1)$ can have a preceding matching write with no intervening conflicting write. Thus, at most one of $H_i^0$ or $H_i^1$ can appear after $S_f$, contradicting completeness of $\sigma$. Hence, for each $i$, at most one of $S_i^0$ or $S_i^1$ appears before $S_f$.

From the construction of $\sigma$, we obtain the following assignment
\[
A(x_i) =
\begin{cases}
0 & \text{if } S_i^0 \text{ appears before } S_f \text{ in } \sigma,\\
1 & \text{otherwise.}
\end{cases}
\]
Now consider any clause $C_j$. The read $\rd(c_j,1)$ in $S_f$ requires a preceding write $\wt(c_j,1)$ in $\sigma$ with no intervening conflicting write. Such a write can only come from a thread corresponding to a literal of $C_j$ scheduled before $S_f$. By definition of $A$, this literal evaluates to true. Thus, each clause of $\varphi$ is satisfied.
\end{proof}

\twoWriterLemma*

\begin{proof}
    The proof follows the same structure as Lemma~\ref{lem:eth-sc}.

    ($\Rightarrow$) Suppose that $A$ is a satisfying assignment for $\varphi$. We construct a partial interleaving $\sigma$ with $0$ preemptions.
    For each variable $x_i$, add the thread $S_i^{A(x_i)}$ and the corresponding literal thread ($S_{x_i}$ if $A(x_i)=1$, otherwise $S_{\neg x_i}$).

    Now, consider a clause $C_j$. If the third literal $\ell_j^3$ is true under $A$, include the thread corresponding to $\ell_j^3$ before $S_f$. Then the write $\wt(d_j,1)$ appears before $S_f$, and the read $\rd(d_j,1)$ in $S_f$ is satisfied.
    Otherwise, one of $\ell_j^1$ or $\ell_j^2$ is true. 
    Include the corresponding literal thread (which writes $\wt(c_j,1)$), followed by $S_j$. Then the read $\rd(c_j,1)$ in $S_j$ has a preceding write with no intervening conflicting write, and $\wt(d_j,1)$ in $S_j$ appears before $S_f$.
    After handling all clauses, add the thread $S_f$. Finally, add all remaining threads. By construction, each read has a preceding matching write with no intervening conflicting write, and $\sigma$ has $0$ preemptions.

\medskip

    ($\Leftarrow$) Suppose there exists a sequentially consistent partial interleaving $\sigma$ with $0$ preemptions. Then $\sigma$ is a concatenation of threads. As in Lemma~\ref{lem:eth-sc}, for each variable $x_i$, at most one of $S_i^0$ or $S_i^1$ appears before $S_f$. Define
\[
A(x_i) =
\begin{cases}
0 & \text{if } S_{i}^0 \text{ appears before } S_f,\\
1 & \text{otherwise.}
\end{cases}
\]

    Consider a clause $C_j$. The read $\rd(d_j,1)$ in $S_f$ must have a preceding write $\wt(d_j,1)$ in $\sigma$ with no intervening conflicting writes. This write can either copme from the thread corresponding to $\ell_j^3$, or from $S_{j}$. If it comes from the thread for $\ell_j^3$, then that thread appears before $S_f$, and hence $\ell_j^3$ evaluates to true under $A$.
    Otherwise, it comes from $S_{j}$. In this case, the read $\rd(c_j,1)$ in $S_{j}$ must have a preceding write $\wt(c_j,1)$ in a literal thread corresponding to $\ell_j^1$ or $\ell_j^2$ that appears before $S_{j}$, and hence before $S_f$. The corresponding literal is therefore true under $A$.

    Thus, in all cases, at least one literal in $C_j$ is true under $A$. Hence $\varphi$ is satisfiable.\qedhere
\end{proof}

\subsection{Proof of the reduction presented in Section~\ref{sec:w1-hard}}
\label{app:w1-hard}

The correctness of the construction is explained through the next two lemmas.

\begin{lemma}\label{lem:independent-set-to-SC-interleaving}
If $G$ has an independent set of size $k$, there is an SC interleaving for $\program_G$ with $3k$ preemptions.
\end{lemma}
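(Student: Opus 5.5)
The plan is to build the interleaving explicitly from an independent set $\{v_1,\dots,v_k\}$ of $G$, giving vertex $v_j$ to selector $\Sel_j$. The schedule follows the four phases sketched in the overview. Each selector is preempted at most three times, at the three points of Figure~\ref{fig:w1-illustration}, and every other thread runs contiguously to its last event. The proof then reduces to checking, phase by phase, that every read sees a matching last write, by tracking the current values of $s$, the $x_j$, the $p_j$ and the $y_e$.

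\textbf{Phase 0.} Run $\Init$ completely; it ends at its last event, so this is not a preemption. Afterwards all $y_e=0$, all $x_j=1$, $s=1$ and $p_0=1$.

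\textbf{Phase 1.} For $j=1,\dots,k$ in turn, run $\Sel_j$ through all blocks $\BB^j_u$ with $u$ preceding $v_j$, and stop just before $\Start^j_{v_j}$ (preemption point $1$). Each such full block reads $y_e=0$ in $\Start^j_u$ and then writes $j$. It reads $s=1$, writes $x_j$ to $0$ and then $1$, and in $\Finish^j_u$ reads $y_e=j$ and resets it to $0$. By induction, every completed block leaves all $y_e$ at $0$, and $s$ stays $1$.

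\textbf{Phase 2.} For $j=1,\dots,k$, run $\Start^j_{v_j}$, then $\rd(s,1)$, then $\wt(x_j,0)$, and preempt (preemption point $2$). The reads $\rd(y_e,0)$ succeed because independence makes the edge sets incident on distinct $v_i$ disjoint, so no earlier $\Sel_i$ in this phase has touched $y_e$. After this phase $y_e=j$ for $e$ incident on $v_j$, and $x_j=0$ for every $j$.

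\textbf{Phase 3.} Run $\Checker_1,\dots,\Checker_k$ completely, in this order. $\Checker_j$ reads $p_{j-1}=1$ (written by $\Init$ or $\Checker_{j-1}$) and then writes $s=0$, or $s=1$ when $j=k$. It reads $x_j=0$, which holds because only $\Init$ and $\Sel_j$ write $x_j$ and $\Sel_j$ last wrote $0$. It then writes $p_j=1$. Afterwards $s=1$.

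\textbf{Phase 4.} For each $j$, run $\wt(x_j,1)$ followed by $\Finish^j_{v_j}$, and preempt if further blocks remain (preemption point $3$). The reads $\rd(y_e,j)$ hold by the same disjointness argument, and the writes restore all $y_e$ to $0$.

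\textbf{Phase 5.} Run each $\Sel_j$'s remaining blocks to completion, one thread at a time. This is the same situation as Phase 1: all $y_e=0$ and $s=1$. Since each thread runs to its end, there are no further preemptions.

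The total is at most $3k$ preemptions; it is fewer when some $v_j$ is the first or last vertex, which is still acceptable under ``at most''. The main point needing care is Phases 2 and 4. There the independence of $\{v_1,\dots,v_k\}$ is used to argue that the $y_e$ variables of different selectors' chosen blocks never interfere. I would also note explicitly that a vertex block may list an edge's read/write pair only for edges incident on that vertex, so no $\Start$ sub-block touches an edge owned by another chosen vertex.
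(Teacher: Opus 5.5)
Your proposal is correct and follows the paper's proof essentially step for step. You use the same schedule: run $\Init$, advance each $\Sel_j$ to the start of $\BB^j_{v_j}$, then advance each through $\wt(x_j,0)$ using the disjointness of the edge sets incident on the chosen vertices, run the checkers in order, finish the chosen blocks, and complete each selector, giving three preemptions per selector. Your explicit tracking of $s$, $x_j$, $p_j$ and $y_e$, and your remark that $3k$ is only an upper bound, make the same argument slightly more carefully than the paper does.
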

\begin{proof} 
Suppose there is an independent set $\{v_1, v_2, \dots, v_k\}$ of size $k$. Here is an SC interleaving of $\program_G$ with $4k$ preemptions:
\begin{description}
\item[Phase 0.] First execute $\Init$ completely.
\item[Phase 1.] Pick $\Sel_1$. Suppose $v_1 = u_{m}$. Execute $\BB^1_{u}$ upto $\BB^1_{u_{m-1}}$. Notice that so far SC has not been violated. Further, the last write on $y_e$ variables is still $\wt(y_e, 0)$. Now, pick $\Sel_2$ and do the same, and execute all operations before the start of $\BB^2_{v_2}$. Repeat this for each of the selector threads $\Sel_3$ until $\Sel_k$. This phase incurs $k$ preemptions. 

\item[Phase 2.] Execute $\Sel_1$ upto $\wt(x_1, 0)$. At this point the last write on edges incident on $v_1$ is $\wt(y_e, 1)$. Since we started with an independent set, none of these edges are incident on $v_j$ for $j > 1$. So, indeed, we can execute $\Sel_j$ upto $\wt(x_j, 0)$ for each $j \ge 1$ in some order, since no two pair of them have a common edge. This phase ends with $k$ preemptions too.

\item[Phase 3.] Now, we completely execute $\Checker_1$, followed by $\Checker_2$, and so on, until $\Checker_k$. No extra preemptions get added. 
\item[Phase 4.] For each $\Sel_j$ we execute upto the end of the current block $\BB^j_{v_j}$ so that all the variables $y_e$ have executed the last write $\wt(y_e, 0)$ at the end of this phase. This phase needs $k$ more preemptions. Total number of preemptions so far is $3k$.
\item Finally, we can execute all the $\Sel_j$ threads one by one without incurring any preemptions. 
\end{description} 
The total number of preemptions in all the phases is thus $3k$ and by construction, the interleaving is sequentially consistent.
\end{proof}

\begin{lemma}\label{lem:sc-interleaving-to-independent-set}
If $\program_G$ has an SC interleaving, then $G$ has an independent set of size $k$.
\end{lemma}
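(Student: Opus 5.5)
The plan is to extract the independent set from the single block of each selector $\Sel_j$ that supplies the value $0$ to the read $\rd(x_j,0)$ of $\Checker_j$. Then I will show there is one moment in the interleaving at which every selector is simultaneously ``inside'' its chosen block. At that moment two selectors sharing an edge would clash on the corresponding $y_e$ variable. Fix an SC interleaving $\sigma$ of $\program_G$; the argument will not use the preemption bound.

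\textbf{Step 1 (defining $v_j$ and ordering the checkers).} The only writes of value $0$ to $x_j$ are in $\Sel_j$, so $\rd(x_j,0)$ in $\Checker_j$ (call this read $r_j$) reads from a $\wt(x_j,0)$ inside some block $\BB^j_{v_j}$. Let $\rho_j$ denote the $\rd(s,1)$ of that block. By program order, the whole $\Start^j_{v_j}$ and $\rho_j$ precede $r_j$. The following $\wt(x_j,1)$ of the block, and hence all of $\Finish^j_{v_j}$, must come after $r_j$, otherwise $r_j$ would see value $1$. The variable $p_{j-1}$ is written only by $\Init$ (for $j=1$) or by $\Checker_{j-1}$. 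Hence $\Checker_1,\dots,\Checker_k$ are executed in this order, with $r_1<r_2<\dots<r_k$, and the write $\wt(s,0)$ of $\Checker_1$ (call it $w_0$) precedes $r_1$.

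\textbf{Step 2 (a common moment).} I claim $\rho_i<w_0$ for every $i$. Suppose $\rho_i>w_0$. Since $\rho_i$ reads $s=1$, some $\wt(s,1)$ lies between $w_0$ and $\rho_i$. $\Init$'s $\wt(s,1)$ precedes $\wt(p_0,1)$, which precedes $\Checker_1$ and hence $w_0$, so it does not qualify. The only other $\wt(s,1)$ is in $\Checker_k$ after $r_k$, and $r_k\ge r_i>\rho_i$, a contradiction. Let $T$ be the moment just before $w_0$. For every $i$, $\Start^i_{v_i}$ has completed before $T$, while $\Finish^i_{v_i}$ begins after $r_i\ge r_1>T$.

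\textbf{Step 3 (exclusivity of $y_e$ and conclusion).} For an edge $e$ incident on $v_i$, let $W_i$ be the write $\wt(y_e,i)$ in $\Start^i_{v_i}$ and $R_i$ the read $\rd(y_e,i)$ in $\Finish^i_{v_i}$. Only $\Sel_i$ writes value $i$ to $y_e$, and by program order $\Sel_i$ writes nothing else to $y_e$ between $W_i$ and $R_i$. Hence SC forces no write to $y_e$ at all strictly inside $(W_i,R_i)$, which contains $T$. Now suppose $v_i$ and $v_l$ ($i\neq l$) are adjacent via $e$, or are equal; in the latter case pick any incident edge, which exists since $G$ is connected. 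Both $W_i,W_l<T<R_i,R_l$. Whichever of $W_i,W_l$ is later lies strictly inside the other's interval, a contradiction. So $v_1,\dots,v_k$ are pairwise distinct and non-adjacent, i.e.\ an independent set of size $k$.

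The main obstacle is Step 2: ruling out that some selector leaves its block before another enters. This is exactly where the $\wt(s,0)$ of the checkers and the $\rd(s,1)$ of the selectors are used. Care is needed to identify all writers of value $1$ to $s$ and to pin their positions relative to $w_0$ and to the reads $r_j$.
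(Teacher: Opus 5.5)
Your proof is correct and follows the same route as the paper. You take $v_j$ to be the block whose $\wt(x_j,0)$ is read by $\Checker_j$, use the $p_j$ chain to order the checkers, show that each selected $\rd(s,1)$ precedes the $\wt(s,0)$ of $\Checker_1$, and derive a clash on $y_e$ for selected vertices that share an edge.

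Your version is more careful than the paper's in three places: you identify all writers of $s=1$ (the paper just asserts ``$\rd(s,1)$ cannot happen after $\Checker_1$''), you make the final $y_e$ clash precise through the interval $(W_i,R_i)$, and you prove the $v_j$ are pairwise distinct, which the paper never does. Two degenerate cases are left open, by you and the paper alike. Your distinctness step needs every vertex to have an incident edge, which fails only for the one-vertex graph, and there, for $k\ge 2$, the statement itself is false. For $k=1$ there is no $\wt(s,0)$, but that case is trivial.
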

\begin{proof} 
Suppose there is an SC interleaving of $\program_G$. Here are some observations.
\begin{itemize}
\item Before the $\rd(x_j, 0)$ event of $\Checker_j$, the last write on $x_j$ should be $\wt(x_j, 0)$. Hence $\Sel_j$ should have executed the operations inside some block $\BB^j_{v_j}$ upto $\wt(x_j, 0)$.  

\item Due to the $p_j$ variables, $\Checker_1$ executes completely before $\Checker_2$. Similarly, $\Checker_2$ executes completely before $\Checker_3$ starts, and so on.
\item Notice the $\wt(s, 0)$ in $\Checker_1$, and a $\rd(s, 1)$ in every block $\BB^j_{v_j}$. Since, (1) all instructions upto $\wt(x_j, 0)$ need to be done before $\Checker_j$, and (2) $\rd(s, 1)$ cannot happen after $\Checker_1$, we deduce that all the selectors have executed upto $\rd(s, 1)$ before the $\wt(s,0)$ of $\Checker_1$. In particular, the block $\Start^j_{v_j}$ is executed before $\wt(s, 0)$ of $\Checker_1$.

\end{itemize} 
The crucial point is that: before the $\wt(s,0)$ of $\Checker_1$, we have executed the operation $\wt(y_e, j)$ for all edges incident on $v_j$, due to the block $\Start^j_{v_j}$. We claim that there can be no edge between $v_i$ and $v_j$ when $i \neq j$. Suppose there is an edge $e = (v_i, v_j)$ between two selected vertices. Thread $\Sel_i$ has a $\wt(y_e, i)$ and thread $\Sel_j$ has a $\wt(y_e, j)$ before executing $\wt(x_i, 0)$ and $\wt(x_j, 0)$ respectively. Notice the reads $\rd(y_e, i)$ in $\Sel_i$ and $\rd(y_e, j)$ in $\Sel_j$ in the $\Finish$ blocks, which need to be executed, after the $\Checker$ threads. This would not be possible in an SC interleaving. We conclude that there can be no common edge between $v_i$ and $v_j$. This shows that the selected vertices form an independent set, of size $k$. 
\end{proof}

The two lemmas above prove the correctness of our reduction. If $G$ has an independent set of size $k$, we get an interleaving in $\program_G$ with at most $3k$ preemptions (Lemma~\ref{lem:independent-set-to-SC-interleaving}). If $\program_G$ has an SC interleaving with at most $3k$ preemptions, in particular, this means there is an SC interleaving. Lemma~\ref{lem:sc-interleaving-to-independent-set} shows there is an independent set of size $k$.

\end{document}